\DeclareMathOperator\erf{erf}
\theoremstyle{definition}
\newtheorem{definition}{Definition}[section]
\theoremstyle{theorem}
\newtheorem{theorem}{Theorem}[section]
\theoremstyle{lemma}
\newtheorem{lemma}{Lemma}[section]
\theoremstyle{corollary}
\theoremstyle{remark}
\newtheorem{remark}{Remark}
\title{Determination of two unknown thermal coefficients through an inverse one-phase fractional Stefan problem}
\author[1,2]{Andrea N. Ceretani\thanks{aceretani@austral.edu.ar}}
\author[1]{Domingo A. Tarzia\thanks{dtarzia@austral.edu.ar}}
\affil[1]{{\small CONICET - Depto. Matem\'atica, Facultad de Ciencias Empresariales, Universidad Austral, Paraguay 1950, S2000FZF Rosario, Argentina.}}
\affil[2]{{\small Depto. de Matem\'atica, Facultad de Ciencias Exactas, Ingenier\'ia y Agrimensura, Universidad Nacional de Rosario, Pellegrini 250, S2000BTP Rosario, Argentina.}}
\date{}
\begin{document}  
\maketitle

\begin{abstract}
We consider a semi-infinite one-dimensional phase-change material with two unknown constant thermal coefficients among the latent heat per unit mass, the specific heat, the mass density and the thermal conductivity. Aiming at the determination of them, we consider an inverse one-phase Stefan problem with an over-specified condition at the fixed boundary and a known evolution for the moving boundary. We assume that the phase-change process presents latent-heat memory effects by considering a fractional time derivative of order $\alpha$ ($0<\alpha<1$) in the Caputo sense and a sharp front model for the interface. According to the choice of the unknown thermal coefficients, six inverse fractional Stefan problems arise. For each of them, we determine necessary and sufficient conditions on data to obtain the existence and uniqueness of a solution of similarity type. Moreover, we present explicit expressions for the temperature and the unknown thermal coefficients. Finally, we show that the results for the classical statement of this problem, associated with $\alpha=1$, are obtained through the fractional model when $\alpha\to 1^-$.
\end{abstract}

\noindent {\bf Keywords}: Inverse Stefan problems, Time fractional Caputo derivative, Unknown thermal coefficients, Explicit solutions, Wright and Mainardi Functions, Over-specified boundary condition.

\noindent {\bf Mathematics Subject Classification 2010}: 26A33, 35C05, 35R11, 35R35, 80A22.

\section{Introduction}
Determination of thermal coefficients for phase-change materials through inverse Stefan problems has been widely studied during the last decades \cite{CeTa2015,CeTa2016,KaIs2012,Ta1982,Ta1983,Ta1987}. Especially, phase-change processes involving solidification or melting have been extensively studied because of their scientific and technological applications \cite{AlSo1993,Ca1984,CaJa1959,Cr1984,Fa2005,Gu2003,Lu1991,Ru1971,Ta2011}. A rewiew of a long bibliography on moving and free boundary value problems for the heat equation can be consulted in \cite{Ta2000}. Recently, a new sort of Stefan problems including time-fracional derivatives have begun to be studied \cite{JiMi2009,KhZaFe2003,Vo2010,Vo2014,VoFaGa2013,RoSa2013,
RoSa2014,RoTa2014}. Some references in fractional derivatives can be found at \cite{KiSrTr2006,Ma2010,Po1999,GoLuMa1999,Lu2010,MaLuPa2001}. In \cite{RoSa2013,RoSa2014,RoTa2014} there are considered free boundary value problems which are obtained by replacing the time derivative in a one-phase Stefan problem by a fractional derivative of order $\alpha$ ($0<\alpha<1$) in the Caputo sense \cite{Ca1967}, and explicit solutions of similarity type are given for the resultant {\em fractional Stefan problems}. A physical interpretation of the problems considered in \cite{RoSa2013,RoSa2014,RoTa2014} is given in \cite{VoFaGa2013}. In that article, authors derive fractional Stefan problems for phase-change processes by substituting the {\em local} expression of the heat flux given by the Fourier law for a new {\em non-local} definition. They consider a heat flux given as a weighted sum of local fluxes back in time, which they express in terms of the Riemann-Liouville integral of order $\alpha$ ($0<\alpha<1$) of the local flux given by the Fourier law. They also explain how this change implies that the new model takes into consideration latent-heat memory effects in the evolution of the phase-change process and give to the parameter $\alpha$ the physical meaning of being the {\em strength of memory retention}. This fractional model reduces to the classical Stefan problem when $\alpha=1$. The same occurs with the solutions of similarity type given in \cite{RoSa2013,RoSa2014,RoTa2014} in the sense that they converge to the similarity solutions of the classical Stefan problems with which they are related to, when $\alpha\to 1^-$. To the authors knowledge, the first use of inverse fractional Stefan problems for the determination of thermal coefficients has been done recently in \cite{Ta2015-c}. In that article the author studies the determination of one unknown thermal coefficient for a semi-infinite material through a fractional one-phase Stefan problem with an over-specified condition at the fixed boundary. Necessary and sufficient conditions on data to obtain the existence and uniqueness of solutions of similarity type are established, and explicit expressions for the the temperature of the material, the free boundary and the unknown thermal coefficient are given. Moreover, it has shown that results through the fractional model reduce to the results previously obtained in \cite{Ta1982} for the determination of one unknown thermal coefficient using a classical inverse Stefan problem. Encouraged by \cite{Ta1983,Ta2015-c}, we consider here the problem of determining two unknown thermal coefficients through an inverse fractional one-phase Stefan problem for which it is known the evolution of the free boundary. 
In order to have dimensional coherence in the time fractional heat equation as well as in the fractional Stefan condition, we have included two extra parameters $\mu_\alpha,\,\nu_\alpha\in(0,1]$ in the model, which are such that:
\begin{subequations}\label{munu-limit}
\begin{align}
\label{mu-limit}&\mu_\alpha\to 1\quad\quad\text{when }\alpha\to 1^-\\
\label{nu-limit}&\nu_\alpha\to 1\quad\quad\text{when }\alpha\to 1^-.
\end{align}
\end{subequations}
In particular, $\mu_\alpha$ and $\nu_\alpha$ can be considered equal to 1 with the corresponding physical dimension (see below). 

\noindent More precisely, we consider the following inverse problem for a one-phase melting process:
\begin{subequations}\label{Problema}
\begin{align}
\label{1}&D^\alpha T(x,t)=\mu_{\alpha}\lambda^2T_{xx}(x,t)&0<x<s(t),\,&t>0\\
\label{2}&T(s(t),t)=T_m& &t>0\\
\label{3}&-kT_x(s(t),t)=\nu_\alpha\rho lD^\alpha s(t)& &t>0\\
\label{4}&T(0,t)=T_0& &t>0\\
\label{5}&kT_x(0,t)=-\frac{q_0}{t^{\alpha/2}}& &t>0
\end{align}
\end{subequations}   
where the unknowns are the temperature $T$ [$^\circ$C] of the liquid phase and two thermal coefficients among:
\begin{table}[h!]
\begin{tabular}{rll}
$k>0$:& thermal conductivity& [$W\,{m^{-1}}\,\left(^\circ C\right)^{-1}$]\\
$\rho>0$:& mass density& [$kg\,m^3$]\\
$l>0$:& latent heat per unit mass& [$J\,kg^{-1}$]\\
$c>0$:& specific heat& [$J\,kg^{-1}\,\left(^\circ C\right)^{-1}$]
\end{tabular}
\end{table}

\noindent According to \cite{Hr2011-a,Hr2011-b}, the coefficient $\mu_\alpha\lambda^2$ [$m^2\,s^{-\alpha}$] in equation (\ref{1}) is a sort of {\em fractional diffusion coefficient}, $\lambda^2$ [$m^2\,s^{-1}$] being the thermal diffusivity given by:
\begin{equation*}
\lambda^2=\frac{k}{\rho c}\hspace{2cm}(\lambda>0).
\end{equation*} 

\noindent We assume that the remaining coefficients:
\begin{table}[h!]
\begin{tabular}{rll}
$T_m>0$:& phase-change temperature& [$^\circ C$]\\
$T_0>T_m$:& temperature at the boundary $x=0$& [$^\circ C$]\\
$q_0>0$:& coefficient characterizing the heat flux at $x=0$& [$W\,m^2\,\left(^\circ C\right)^{-1}$]\\
$0<\alpha<1$:& strength of the memory retention & (dimensionless)\\
$0<\mu_{\alpha}\leq 1$: & parameter required to have dimensional coherence in equation (\ref{1}) &[$s^{1-\alpha}$]\\
$0<\nu_{\alpha}\leq 1$: & parameter required to have dimensional coherence in condition (\ref{3}) &[$s^{\alpha-1}$]
\end{tabular}
\end{table}

\noindent  involved in Problem (\ref{Problema}), are all known (e.g. through a phase-change experiment). Aiming at the simultaneous determination of two unknown thermal coefficients, we consider that the time evolution of the sharp interface $s$ is also known. More precisely, we follow \cite{RoSa2013,RoSa2014,RoTa2014} in assuming that it is given by:
\begin{equation}\label{s}
s(t)=\sigma t^{\alpha/2},\quad t>0,
\end{equation}
with $\sigma>0$ [$m\,s^{-\alpha/2}$]. The operator $D^\alpha$ in (\ref{1}) and (\ref{3}) represents the fractional time derivative of order $\alpha$ in the Caputo sense, which is defined by \cite{Ca1967}:
\begin{equation}\label{DerCaputo}
D^\alpha f(t)=\left\{\begin{array}{lcl}
\frac{1}{\Gamma(1-\alpha)}\displaystyle\int_0^t\frac{f'(\tau)}{(t-\tau)^\alpha}d\tau&\text{ if }&0<\alpha<1\\
f'(t)&\text{ if }&\alpha=1
\end{array}\right.,\quad t>0
\end{equation} 
for any $f\in W^1(\mathbb{R}^+)=\left\{f\in C^1(\mathbb{R}^+)/f'\in L^1(\mathbb{R}^+)\right\}$, where $\Gamma$ is the Gamma function defined by:
\begin{equation*}
\Gamma(x)=\displaystyle\int_0^{+\infty} s^{x-1}\exp(-s)ds,\quad x>0.
\end{equation*}

\section{Solutions of similarity type}

We begin with a definition of being a solution to the inverse fractional Stefan problem (\ref{Problema}), which is based in the definition established in \cite{RoSa2013,RoSa2014} for the direct case.

\begin{definition}
The triplet given by the temperature $T$ and two unknown thermal coefficients among $k$, $\rho$, $l$ and $c$ is a solution to the inverse fractional one-phase Stefan problem (\ref{Problema}) if:
\begin{enumerate}\label{DefSol}
\item $T$ is defined in $\mathbb{R}_0^+\times\mathbb{R}_0^+$.
\item $T\in C(\Omega)\cap C_x^2(\mathbb{R}^+\times\mathbb{R}^+)\cap W_t^1(\mathbb{R}^+)$, 
where $\Omega=\left\{(x,t)\,/\,0<x<s(t),\,t>0\right\}$ and $W_t^1(\mathbb{R}^+)=\left\{f:\mathbb{R}^+\times\mathbb{R}^+\to\mathbb{R}/\,f(x,\cdot)\in W^1(\mathbb{R}^+)\,\forall\,x>0\right\}$.
\item $T$ is a continuous function over $\Omega\cup\partial_p\Omega$, where $\partial_p\Omega=\left\{(0,t)/t>0\right\}\cup\left\{(s(t),t)/t>0\right\}$, except maybe in the point $(0,0)$, in which it must be satisfied the following condition:
\begin{equation*}
0\leq \displaystyle\liminf_{(x,t)\to(0,0)}T(x,t)\leq\displaystyle\limsup_{(x,t)\to(0,0)}T(x,t)<\infty.
\end{equation*}
\item For all $t>0$, exists $\frac{\partial}{\partial x}T(s(t),t)$.
\item The unknown thermal coefficients are positive real numbers.
\item The temperature $T$ and the two unknown thermal coefficients verify (\ref{Problema}).
\end{enumerate}
\end{definition} 

\begin{remark}
We note that the function $s$ given by (\ref{s}) is consistent with the definition of being a solution to a (direct) fractional one-phase Stefan problem given in \cite{RoSa2013,RoSa2014}, since it is a positive function belonging to $C(\mathbb{R}_0^+)\cap W^1(\mathbb{R}^+)$. 
\end{remark}

Encouraged by \cite{RoSa2013,RoSa2014,RoTa2014,Ta2015-c}, in this section we are looking for a solution of similarity type to Problem (\ref{Problema}). That is, we look for a temperature function $T$ such that:
\begin{equation}\label{TPreliminar}
T(x,t)=A+B\left(1-W\left(-\frac{x}{\sqrt{\mu_\alpha}\lambda t^{\alpha/2}},-\frac{\alpha}{2},1\right)\right),\quad 0<x<s(t),\,t>0,
\end{equation}  
where $A$ and $B$ are real numbers that must be determined, and $W$ is the {\em Wright function} given by \cite{Wr1934,Wr1940}:
\begin{equation}\label{Wright}
W(z,a,b)=\displaystyle\sum_{k=0}^{+\infty}\frac{z^k}{k!\Gamma(a k+b)},\quad z\in\mathbb{C},\,a>-1,\,b\in\mathbb{C}.
\end{equation} 

According to \cite{RoSa2013,RoSa2014}, we know that the function $T$ given by (\ref{TPreliminar}) fulfill conditons 1 to 4 in Definition \ref{DefSol} and that it satisfies the fractional diffusion equation (\ref{1}).   Noting that $W\left(0,-\frac{\alpha}{2},1\right)=1$, it follows from condition (\ref{4}) that:
\begin{equation}\label{A}
A=T_0.
\end{equation} 
From this and condition (\ref{2}), we have that:
\begin{equation}\label{B}
B=\frac{T_m-T_0}{1-W\left(-\frac{\sigma}{\sqrt{\mu_\alpha}\lambda},-\frac{\alpha}{2},1\right)}.
\end{equation} 
We observe that $W\left(-\frac{\sigma}{\sqrt{\mu_{\alpha}}\lambda},-\frac{\alpha}{2},1\right)\neq 1$ because $W\left(-x,-\frac{\alpha}{2},1\right)$ is a strictly decreasing function in $\mathbb{R}^+$ \cite{RoSa2013} and, as we have already noted, $W\left(0,-\frac{\alpha}{2},1\right)=1$. 

\noindent By taking into consideration that the Wright function satisfies \cite{Wr1934,Wr1940}:
\begin{equation*}
\frac{d}{dz}W(z,a,b)=W(z,a,a+b),\quad z\in\mathbb{C},\,a>-1,\,b\in\mathbb{C},
\end{equation*}
and the fractional Caputo derivative of a power function with positive exponent is given by \cite{Po1999}:
\begin{equation*}
D^\alpha t^p=\frac{\Gamma(p+1)}{\Gamma\left(p-\alpha+1\right)}t^{p-\alpha},\quad t>0,\,p>0,
\end{equation*}
it follows from the fractional Stefan condition (\ref{3}) that:
\begin{equation}\label{Eq1-Preliminar}
\frac{\sqrt{\mu_\alpha}l\left[1-W\left(-\frac{\sigma}{\sqrt{\mu_{\alpha}}\lambda},-\frac{\alpha}{2},1\right)\right]}{\lambda c M_{\frac{\alpha}{2}}\left(\frac{\sigma}{\sqrt{\mu_\alpha}\lambda}\right)}=
\frac{(T_0-T_m)\Gamma\left(-\frac{\alpha}{2}+1\right)}{\nu_\alpha\sigma\Gamma\left(\frac{\alpha}{2}+1\right)},
\end{equation}
where $M_{\nu}$ is the {\em Mainardi function}, which is defined by \cite{Ma1995}:
\begin{equation}\label{Mainardi}
M_{\nu}(z)=W\left(-z,-\nu,1-\nu\right),\quad z\in\mathbb{C},\,0<\nu<1
\end{equation}
and satisfies $M_\nu(z)>0$ for all $z\in\mathbb{R}^+$ \cite{RoSa2013}.
 
\noindent Finally, when we consider $B$ given by (\ref{B}), the heat flux boundary condition (\ref{5}) implies that it must be satisfied the following equality:
\begin{equation}\label{Eq2-Preliminar}
\frac{\sqrt{\mu_{\alpha}}q_0\lambda}{k}\left[1-W\left(-\frac{\sigma}{\sqrt{\mu_{\alpha}}\lambda},-\frac{\alpha}{2},1\right)\right]=\frac{T_0-T_m}{\Gamma\left(-\frac{\alpha}{2}+1\right)}.
\end{equation}

We have thus proved the following result:

\begin{theorem}\label{ThCaracterizacion}
If the moving boundary $s$ is defined by (\ref{s}), then the function $T$ given by:
\begin{equation}\label{T}
T(x,t)=T_0-\frac{T_0-T_m}{1-W\left(-\xi,-\frac{\alpha}{2},1\right)}\left[1-W\left(-\frac{x}{\sqrt{\mu_\alpha}\lambda t^{\alpha/2}},-\frac{\alpha}{2},1\right)\right],\quad 0<x<s(t),\,t>0
\end{equation}
is a solution to Problem (\ref{Problema}) with two unknown thermal coefficients among $k$, $\rho$, $l$ and $c$, if and only if these are a solution to the following system of equations:
\begin{subequations}\label{SistEq}
\begin{align}
\label{Eq1}&\frac{\xi\left[1-W\left(-\xi,-\frac{\alpha}{2},1\right)\right]}{M_{\frac{\alpha}{2}}(\xi)}=
\frac{c(T_0-T_m)\Gamma\left(-\frac{\alpha}{2}+1\right)}{\mu_\alpha\nu_\alpha l\Gamma\left(\frac{\alpha}{2}+1\right)}\\
\label{Eq2}&1-W\left(-\xi,-\frac{\alpha}{2},1\right)=\frac{\sqrt{k\rho c}(T_0-T_m)}{\sqrt{\mu_{\alpha}}q_0\Gamma\left(-\frac{\alpha}{2}+1\right)}
\end{align}
\end{subequations}
where the dimensionless parameter $\xi$ is defined by:
\begin{equation}\label{xi}
\xi=\frac{\sigma}{\sqrt{\mu_\alpha}\lambda}.
\end{equation}
\end{theorem}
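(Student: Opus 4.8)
The plan is to feed the similarity ansatz (\ref{TPreliminar}) into the four side conditions of Problem (\ref{Problema}) and see which constraints survive. First I would invoke \cite{RoSa2013,RoSa2014} to note that every function of the form (\ref{TPreliminar})---with $\lambda>0$, $0<\mu_\alpha\le 1$ and real constants $A,B$---automatically satisfies items~1--4 of Definition~\ref{DefSol} and solves the fractional diffusion equation (\ref{1}) on $\Omega$; hence only the boundary data (\ref{2}), (\ref{4}), (\ref{5}), the Stefan condition (\ref{3}) and the positivity requirement (item~5) remain to be imposed. Condition (\ref{4}) at $x=0$, together with $W(0,-\tfrac{\alpha}{2},1)=1$, forces $A=T_0$ as in (\ref{A}); condition (\ref{2}) at $x=s(t)=\sigma t^{\alpha/2}$ then forces $B$ to be (\ref{B}), which is well defined because $x\mapsto W(-x,-\tfrac{\alpha}{2},1)$ is strictly decreasing on $\mathbb{R}^+$ and equals $1$ at the origin, so its value at $\xi=\sigma/(\sqrt{\mu_\alpha}\lambda)>0$ is different from $1$. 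With these $A$ and $B$ the temperature is exactly (\ref{T}), and (\ref{1}), (\ref{2}), (\ref{4}) now hold by construction.

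It then remains to show that, for positive thermal coefficients, (\ref{3}) is equivalent to (\ref{Eq1}) and (\ref{5}) to (\ref{Eq2}). I would differentiate (\ref{T}) in $x$ using $\tfrac{d}{dz}W(z,a,b)=W(z,a,a+b)$ and the definition (\ref{Mainardi}) of the Mainardi function, obtaining $T_x(x,t)=\dfrac{B}{\sqrt{\mu_\alpha}\lambda\,t^{\alpha/2}}\,M_{\alpha/2}\!\Big(\dfrac{x}{\sqrt{\mu_\alpha}\lambda t^{\alpha/2}}\Big)$ (with $B<0$ since $T_0>T_m$). Evaluating this at $x=s(t)$, where the similarity variable collapses to $\xi$, and at $x=0$, where $M_{\alpha/2}(0)=1/\Gamma(1-\tfrac{\alpha}{2})$ by (\ref{Wright}), and computing $D^\alpha s(t)=\sigma\,D^\alpha t^{\alpha/2}=\sigma\,\dfrac{\Gamma(\tfrac{\alpha}{2}+1)}{\Gamma(1-\tfrac{\alpha}{2})}\,t^{-\alpha/2}$ from the Caputo derivative of a power \cite{Po1999}, I substitute into (\ref{3}) and (\ref{5}). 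In both identities the common factor $t^{-\alpha/2}$ cancels---this is exactly the time compatibility built into the prescribed interface (\ref{s}) and into the $t^{-\alpha/2}$ weight of the overspecified flux in (\ref{5})---leaving the time-free equalities (\ref{Eq1-Preliminar}) and (\ref{Eq2-Preliminar}).

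Finally, I would pass from (\ref{Eq1-Preliminar})--(\ref{Eq2-Preliminar}) to (\ref{SistEq}) by pure algebra: substitute $\lambda^2=k/(\rho c)$ (so that $k=\rho c\,\lambda^2$ and $k/\lambda=\sqrt{k\rho c}$) and $\sigma=\sqrt{\mu_\alpha}\,\lambda\,\xi$, then clear the strictly positive factors $M_{\alpha/2}(\xi)$, $1-W(-\xi,-\tfrac{\alpha}{2},1)$ and $\rho\lambda$ and rearrange; this reproduces (\ref{Eq1}) and (\ref{Eq2}) verbatim. Running these steps backwards gives the converse: if the two chosen coefficients are positive and solve (\ref{SistEq}), then (\ref{3}) and (\ref{5}) hold, so $T$ together with the coefficients is a solution in the sense of Definition~\ref{DefSol}. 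I expect no genuine analytic difficulty in this argument; the single point that needs care is the bookkeeping of the auxiliary dimensional parameters $\mu_\alpha,\nu_\alpha$ and of the various algebraic relations linking $k$, $\lambda$, $\rho$, $c$, $\sigma$ and $\xi$, so that the cancellations yielding (\ref{Eq1}) and (\ref{Eq2}) come out correctly.
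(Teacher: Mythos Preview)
Your proposal is correct and follows essentially the same route as the paper: start from the similarity ansatz (\ref{TPreliminar}), use \cite{RoSa2013,RoSa2014} for (\ref{1}) and the regularity items, determine $A,B$ from (\ref{4}) and (\ref{2}), differentiate via $\tfrac{d}{dz}W(z,a,b)=W(z,a,a+b)$ and the Caputo power rule to reduce (\ref{3}) and (\ref{5}) to the time-free identities (\ref{Eq1-Preliminar})--(\ref{Eq2-Preliminar}), and then rewrite in terms of $\xi$ and $\lambda^2=k/(\rho c)$ to get (\ref{SistEq}). If anything, you are a bit more explicit than the paper in spelling out the $t^{-\alpha/2}$ cancellation and in stating that the converse follows by reversing the algebra, which the paper leaves implicit.
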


\section{Existence and uniqueness of solutions of similarity type. Formulae for the two unknown thermal coefficients.}\label{SeFormulas}
In this section we will look for necessary and sufficient conditions on data to have existence and uniqueness of solution to Problem (\ref{Problema}) for each possible choice of the two unknown thermal coefficients, as well as explicit formulae for them. Thanks to Theorem \ref{ThCaracterizacion}, it can be done through analysing and solving the system of equations (\ref{SistEq}) for each pair of unknown thermal coefficients. With the aim of organizing the main results of this section, we will write:
\begin{table}[h!]
\begin{tabular}{ll}
Case 1: Determination of $l$ and $c$&\hspace*{2cm}
Case 4: Determination of $c$ and $\rho$\\
Case 2: Determination of $c$ and $k$&\hspace*{2cm}
Case 5: Determination of $l$ and $\rho$\\
Case 3: Determination of $l$ and $k$&\hspace*{2cm}
Case 6: Determination of $\rho$ and $k$.
\end{tabular}
\end{table}

For each $\alpha\in(0,1)$, we introduce the real functions $F_\alpha$, $G_\alpha$ and $H_\alpha$ defined in $\mathbb{R}^+$ by:
\begin{subequations}\label{FGH}
\begin{align}
\label{F}&F_\alpha(x)=\frac{f_\alpha(x)}{x}\\
\label{G}&G_\alpha(x)=xf_\alpha(x)\\
\label{H}&H_\alpha(x)=\frac{xf_\alpha(x)}{M_{\frac{\alpha}{2}}(x)}
\end{align}
\end{subequations}
where
\begin{equation}\label{f}
f_\alpha(x)=1-W\left(-x,-\frac{\alpha}{2},1\right).
\end{equation}

The following result will be useful all throughout this section:

\begin{lemma}\label{LeFGHProp}
For any $\alpha\in(0,1)$, the real functions $F_\alpha$, $G_\alpha$ and $H_\alpha$ defined in (\ref{FGH}) satisfy the following conditions:
\begin{subequations}\label{FGHProp}
\begin{align}
\label{FProp}F_\alpha(0^+)&=\frac{1}{\Gamma\left(-\frac{\alpha}{2}+1\right)},
&F_\alpha(+\infty)&=0,
&F_\alpha'(x)&<0\quad\forall\,x>0\\
\label{GProp}G_\alpha(0^+)&=0,
&G_\alpha(+\infty)&=+\infty,
&G_\alpha'(x)&>0\quad\forall\,x>0\\
\label{HProp}H_\alpha(0^+)&=0,
&H_\alpha(+\infty)&=+\infty,
&H_\alpha'(x)&>0\quad\forall\,x>0.
\end{align}
\end{subequations}
\end{lemma}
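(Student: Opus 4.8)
\noindent\emph{Proof proposal.} The plan is to reduce the whole statement to a short list of properties of the single auxiliary function $f_\alpha$ of (\ref{f}) and of the Mainardi function $M_{\frac{\alpha}{2}}$, and then to read off the boundary values and the signs of the derivatives of $F_\alpha$, $G_\alpha$, $H_\alpha$ by elementary calculus.

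First I would collect the basic facts about $f_\alpha$. From the series (\ref{Wright}), $W\left(0,-\frac{\alpha}{2},1\right)=\frac{1}{\Gamma(1)}=1$, so $f_\alpha(0^+)=0$. Differentiating $f_\alpha(x)=1-W\left(-x,-\frac{\alpha}{2},1\right)$ by means of the identity $\frac{d}{dz}W(z,a,b)=W(z,a,a+b)$ yields $f_\alpha'(x)=W\left(-x,-\frac{\alpha}{2},1-\frac{\alpha}{2}\right)=M_{\frac{\alpha}{2}}(x)$, so that $f_\alpha$ is a primitive of the Mainardi function, and differentiating once more gives $f_\alpha''(x)=-W\left(-x,-\frac{\alpha}{2},1-\alpha\right)$. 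Using the positivity $M_{\frac{\alpha}{2}}>0$ on $\mathbb{R}^+$ already quoted from \cite{RoSa2013} together with $f_\alpha(0^+)=0$, the function $f_\alpha$ is strictly increasing and strictly positive on $\mathbb{R}^+$; since $\int_0^{+\infty}M_{\frac{\alpha}{2}}(r)\,dr=1$ (the normalization of the $M$-Wright density, cf. \cite{Ma1995,Ma2010}) we get $f_\alpha(+\infty)=1$; and from the series $M_{\frac{\alpha}{2}}(0^+)=\frac{1}{\Gamma(1-\alpha/2)}=\frac{1}{\Gamma(-\alpha/2+1)}$. The one remaining ingredient, and the crucial one, is that $M_{\frac{\alpha}{2}}$ is strictly decreasing on $\mathbb{R}^+$, equivalently that $f_\alpha$ is strictly concave, equivalently that $W\left(-x,-\frac{\alpha}{2},1-\alpha\right)>0$ for $x>0$; since $\frac{\alpha}{2}\in\left(0,\frac{1}{2}\right)$ this is the complete monotonicity of the $M$-Wright function of order at most $\frac12$ (and it can also be established by the same termwise / integral-representation argument used in \cite{RoSa2013} to control the sign of $W\left(-x,-\frac{\alpha}{2},\cdot\right)$). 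Being positive, decreasing and integrable on $\mathbb{R}^+$, $M_{\frac{\alpha}{2}}$ then necessarily satisfies $M_{\frac{\alpha}{2}}(+\infty)=0$.

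With this toolbox the three rows of (\ref{FGHProp}) come out quickly. For $F_\alpha=f_\alpha/x$: L'H\^opital's rule (or the first-order Taylor expansion of $f_\alpha$ at $0$) gives $F_\alpha(0^+)=f_\alpha'(0^+)=M_{\frac{\alpha}{2}}(0^+)=\frac{1}{\Gamma(-\alpha/2+1)}$, boundedness of $f_\alpha$ gives $F_\alpha(+\infty)=0$, and writing $F_\alpha'(x)=-g(x)/x^2$ with $g(x)=f_\alpha(x)-xf_\alpha'(x)$, one has $g(0)=0$ and $g'(x)=-xf_\alpha''(x)>0$, whence $g>0$ and $F_\alpha'<0$ on $\mathbb{R}^+$. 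For $G_\alpha=xf_\alpha$: the limits are immediate from $f_\alpha(0^+)=0$ and $f_\alpha(+\infty)=1$, while $G_\alpha'(x)=f_\alpha(x)+xM_{\frac{\alpha}{2}}(x)>0$ is a sum of positive terms for $x>0$. For $H_\alpha=G_\alpha/M_{\frac{\alpha}{2}}$: $H_\alpha(0^+)=0$ because $G_\alpha(0^+)=0$ while $M_{\frac{\alpha}{2}}(0^+)\neq 0$, $H_\alpha(+\infty)=+\infty$ because $G_\alpha(+\infty)=+\infty$ while $M_{\frac{\alpha}{2}}(+\infty)=0^+$, and $H_\alpha'(x)=\big(G_\alpha'(x)M_{\frac{\alpha}{2}}(x)-G_\alpha(x)M_{\frac{\alpha}{2}}'(x)\big)/M_{\frac{\alpha}{2}}(x)^2>0$, its numerator being the sum of the positive term $G_\alpha'M_{\frac{\alpha}{2}}$ and the nonnegative term $-G_\alpha M_{\frac{\alpha}{2}}'$.

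The only point that is not bookkeeping with signs is the strict concavity of $f_\alpha$, equivalently the monotonicity of $M_{\frac{\alpha}{2}}$: it is exactly what powers $F_\alpha'<0$ and $H_\alpha'>0$. I would therefore isolate it as a small auxiliary claim --- invoking the complete monotonicity of the $M$-Wright function for indices in $\left(0,\frac12\right]$, or reproducing the sign argument of \cite{RoSa2013} --- and treat the rest of the lemma as the routine computation sketched above.
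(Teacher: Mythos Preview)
Your proof is correct and follows essentially the same approach as the paper: both reduce the lemma to the positivity and strict monotonicity of $f_\alpha$ and $M_{\alpha/2}$ together with their limiting values, and then read off (\ref{GProp})--(\ref{HProp}) by elementary calculus, with (\ref{FProp}) handled separately (the paper cites \cite{Ta2015-c} for it, you argue it directly via $g(x)=f_\alpha(x)-xf_\alpha'(x)$). The only substantive difference is that you supply more detail---deriving $f_\alpha'=M_{\alpha/2}$, the normalization $f_\alpha(+\infty)=1$, and the concavity argument for $F_\alpha'<0$---whereas the paper simply cites \cite{RoSa2013,GoLuMa1999} for the key facts about $f_\alpha$ and $M_{\alpha/2}$ and leaves the rest as ``elementary computations.''
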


\begin{proof}
The proof of (\ref{FProp}) was done in \cite{Ta2015-c}. The demonstrations of (\ref{GProp}) and (\ref{HProp}) follow from elementary computations and the following facts:
\begin{enumerate}
\item Since $0<\alpha<1$, $f_\alpha$ is a positive and strictly increasing function in $\mathbb{R}^+$ \cite{RoSa2013}.
\item Since $0<\alpha<1$, $M_{\frac{\alpha}{2}}$ is a positive and strictly decreasing function in $\mathbb{R}^+$ \cite{RoSa2013}.
\item $\displaystyle\lim_{x\to+\infty}f(x)=1$ and $\displaystyle\lim_{x\to+\infty}M_{\frac{\alpha}{2}}(x)=0$ \cite{GoLuMa1999}. 
\end{enumerate} 
\end{proof}
\begin{theorem}[Case 1: Determination of $l$ and $c$]\label{Thlc}
If the moving boundary $s$ is given by (\ref{s}), then the Problem (\ref{Problema}) admits the solution $T$, $l$ and $c$ given by (\ref{T}) and:
\begin{subequations}\label{lc}
\begin{align}
\label{c-lc}&c=\frac{1}{\rho k}\left[\frac{q_0\sqrt{\mu_\alpha}\Gamma\left(-\frac{\alpha}{2}+1\right)\left[1-W\left(-\xi,-\frac{\alpha}{2},1\right)\right]}{T_0-T_m}\right]^2\\
\label{l-lc}&l=\frac{q_0^2\,\Gamma^3\left(-\frac{\alpha}{2}+1\right)\left[1-W\left(-\xi,-\frac{\alpha}{2},1\right)\right]M_{\frac{\alpha}{2}}(\xi)}{\nu_\alpha\rho k(T_0-T_m)\Gamma\left(\frac{\alpha}{2}+1\right)\xi}
\end{align}
\end{subequations}
respectively, $\xi$ being the only one solution to the equation:
\begin{equation}\label{EqXi-lc}
F_\alpha(x)=\frac{k(T_0-T_m)}{\sigma q_0\Gamma\left(-\frac{\alpha}{2}+1\right)},\quad x>0,
\end{equation}
if and only if the following inequality holds:
\begin{equation}\label{Cond-lc}
\frac{k(T_0-T_m)}{\sigma q_0}<1.
\end{equation}
\end{theorem}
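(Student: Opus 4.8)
The plan is to reduce everything to the system (\ref{SistEq}) guaranteed by Theorem \ref{ThCaracterizacion} and then solve it explicitly for the pair $(l,c)$, using Lemma \ref{LeFGHProp} to pin down $\xi$. First I would observe that when $l$ and $c$ are the unknowns, equation (\ref{Eq2}) can be rewritten purely in terms of $\xi$: since $\rho$ and $k$ are now known, dividing through and recalling $\xi=\sigma/(\sqrt{\mu_\alpha}\lambda)$ and $\lambda^2=k/(\rho c)$, one eliminates $c$ from (\ref{Eq2}) by expressing $\sqrt{k\rho c}=\sqrt{\mu_\alpha}\,\sigma\sqrt{k\rho}/(\sigma/\sqrt{c})$... more cleanly: note $\xi\sqrt{\mu_\alpha}\lambda=\sigma$, so $\sqrt{c}=\sqrt{k/\rho}/(\lambda)=\sqrt{k/\rho}\,\xi\sqrt{\mu_\alpha}/\sigma$, hence $\sqrt{k\rho c}=k\xi\sqrt{\mu_\alpha}/\sigma$. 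Substituting this into (\ref{Eq2}) turns it into
\[
1-W\!\left(-\xi,-\tfrac{\alpha}{2},1\right)=\frac{k\,\xi\,(T_0-T_m)}{\sigma q_0\,\Gamma\!\left(-\tfrac{\alpha}{2}+1\right)},
\]
which, after dividing by $\xi$ and using the definition (\ref{f}) of $f_\alpha$ and (\ref{F}) of $F_\alpha$, is exactly $F_\alpha(\xi)=\dfrac{k(T_0-T_m)}{\sigma q_0\,\Gamma\left(-\tfrac{\alpha}{2}+1\right)}$, i.e. equation (\ref{EqXi-lc}).

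Next I would invoke Lemma \ref{LeFGHProp}: by (\ref{FProp}), $F_\alpha$ is continuous and strictly decreasing on $\mathbb{R}^+$ with $F_\alpha(0^+)=1/\Gamma\left(-\tfrac{\alpha}{2}+1\right)$ and $F_\alpha(+\infty)=0$. Therefore (\ref{EqXi-lc}) has a solution, necessarily unique, precisely when the right-hand side lies in the open interval $\bigl(0,\,1/\Gamma\left(-\tfrac{\alpha}{2}+1\right)\bigr)$. The positivity is automatic from the data hypotheses $T_0>T_m$, $\sigma,q_0>0$, while the upper bound $\dfrac{k(T_0-T_m)}{\sigma q_0\,\Gamma\left(-\tfrac{\alpha}{2}+1\right)}<\dfrac{1}{\Gamma\left(-\tfrac{\alpha}{2}+1\right)}$ is equivalent, after cancelling the common positive factor $\Gamma\left(-\tfrac{\alpha}{2}+1\right)$, to inequality (\ref{Cond-lc}). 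This establishes that (\ref{Cond-lc}) is the necessary and sufficient condition for the existence of a (unique) admissible $\xi>0$.

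Finally, with $\xi$ determined I would recover $c$ and $l$. From $\sqrt{k\rho c}=k\xi\sqrt{\mu_\alpha}/\sigma$ and (\ref{Eq2}) one solves for $\xi$ in terms of the Wright value and substitutes back, or more directly: squaring (\ref{Eq2}) gives $c=\dfrac{1}{k\rho}\left[\dfrac{\sqrt{\mu_\alpha}\,q_0\,\Gamma\left(-\tfrac{\alpha}{2}+1\right)\left[1-W\left(-\xi,-\tfrac{\alpha}{2},1\right)\right]}{T_0-T_m}\right]^2$, which is (\ref{c-lc}); positivity of $c$ is clear since every factor is positive. Then I would feed this $c$ into (\ref{Eq1}) and solve for $l$: (\ref{Eq1}) reads $l=\dfrac{c(T_0-T_m)\Gamma\left(-\tfrac{\alpha}{2}+1\right)M_{\frac{\alpha}{2}}(\xi)}{\mu_\alpha\nu_\alpha\,\Gamma\left(\tfrac{\alpha}{2}+1\right)\,\xi\left[1-W\left(-\xi,-\tfrac{\alpha}{2},1\right)\right]}$; inserting the formula (\ref{c-lc}) for $c$ and simplifying (the factor $\left[1-W\left(-\xi,-\tfrac{\alpha}{2},1\right)\right]^2$ from $c$ cancels one power in the denominator, leaving a single power in the numerator, the $\mu_\alpha$'s combine) yields (\ref{l-lc}); positivity of $l$ follows from $M_{\frac{\alpha}{2}}(\xi)>0$ and $f_\alpha(\xi)>0$ noted after (\ref{Mainardi}) and in Lemma \ref{LeFGHProp}. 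The only genuinely delicate point is the bookkeeping in this last simplification — matching the exact powers of $q_0$, $\Gamma\left(-\tfrac{\alpha}{2}+1\right)$, $k$, $\rho$ and the Wright/Mainardi factors so that the result coincides verbatim with (\ref{l-lc}) — but it is a routine algebraic manipulation with no conceptual obstacle; conversely, any admissible solution of (\ref{Problema}) must by Theorem \ref{ThCaracterizacion} satisfy (\ref{SistEq}), hence the same $\xi$ and the same formulae, giving uniqueness. $\blacksquare$
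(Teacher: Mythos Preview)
Your proposal is correct and follows essentially the same approach as the paper: solve (\ref{Eq2}) for $c$, combine with (\ref{Eq1}) to get $l$, and use the relation $\sqrt{k\rho c}=k\xi\sqrt{\mu_\alpha}/\sigma$ (equivalently, substitute the formula for $c$ back into the definition of $\xi$) to obtain the scalar equation (\ref{EqXi-lc}), then invoke (\ref{FProp}) to characterize solvability by (\ref{Cond-lc}). The only difference is ordering---you derive the equation for $\xi$ first and then recover $c,l$, whereas the paper writes down $c,l$ first and then extracts the $\xi$-equation---but this is cosmetic.
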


\begin{proof}
Isolating $c$ from equation (\ref{Eq2}), we have that $c$ is given by (\ref{c-lc}). Now, by combining this with equation (\ref{Eq1}), it can be obtained that $l$ is given by (\ref{l-lc}). It must be noted that the parameter $\xi$ involved in both (\ref{c-lc}) and (\ref{l-lc}) depends on $c$. Nevertheless, it can be determined without making any reference to $c$ as follows. By replacing (\ref{c-lc}) in the definition of $\xi$ given in (\ref{xi}), we have that $\xi$ must be a solution to equation (\ref{EqXi-lc}). It follows from (\ref{FProp}) that the equation (\ref{EqXi-lc}) admits a solution if and only if its RHS is between $0$ and $\frac{1}{\Gamma\left(-\frac{\alpha}{2}+1\right)}$. To complete the proof only remains to observe that this is equivalent to say that inequality (\ref{Cond-lc}) must hold and that, when this happens, equation (\ref{EqXi-lc}) has an only one positive solution.
\end{proof}

\begin{theorem}[Case 2: Determination of $c$ and $k$]\label{Thck}
If the moving boundary $s$ is given by (\ref{s}), then the Problem (\ref{Problema}) admits the solution $T$, $c$ and $k$ given by (\ref{T}) and:
\begin{subequations}\label{ck}
\begin{align}
\label{c-ck}&c=\frac{\mu_{\alpha}\nu_\alpha l\Gamma\left(\frac{\alpha}{2}+1\right)\xi\left[1-W\left(-\xi,-\frac{\alpha}{2},1\right)\right]}{(T_0-T_m)\Gamma\left(-\frac{\alpha}{2}+1\right)M_{\frac{\alpha}{2}}(\xi)}\\
\label{k-ck}&k=\frac{q_0^2\,\Gamma^3\left(-\frac{\alpha}{2}+1\right)\left[1-W\left(-\xi,-\frac{\alpha}{2},1\right)\right]M_{\frac{\alpha}{2}}(\xi)}{\nu_\alpha\rho l(T_0-T_m)\Gamma\left(\frac{\alpha}{2}+1\right)\xi}
\end{align}
\end{subequations}
respectively, $\xi$ being the only one solution to the equation:
\begin{equation}\label{EqXi-ck}
M_{\frac{\alpha}{2}}(x)=\frac{\nu_\alpha\sigma\rho l\Gamma\left(\frac{\alpha}{2}+1\right)}{q_0\Gamma^2\left(-\frac{\alpha}{2}+1\right)},\quad x>0,
\end{equation}
if and only if the following inequality holds:
\begin{equation}\label{Cond-ck}
\frac{\nu_\alpha\sigma\rho l\Gamma\left(\frac{\alpha}{2}+1\right)}{q_0\Gamma\left(-\frac{\alpha}{2}+1\right)}<1.
\end{equation}
\end{theorem}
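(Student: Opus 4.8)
The plan is to follow closely the proof of Theorem \ref{Thlc}, now reading the system (\ref{SistEq}) with $c$ and $k$ as the two unknowns (note that $\rho$, $l$ and the diffusivity enter only through $\xi$, via (\ref{xi}) and $\lambda^2=k/(\rho c)$). First I would isolate $c$ from equation (\ref{Eq1}), which gives the expression (\ref{c-ck}); since all the factors on its right-hand side are positive (using $f_\alpha>0$ and $M_{\frac{\alpha}{2}}>0$ on $\mathbb{R}^+$), $c>0$ holds automatically once $\xi>0$ is fixed. Next I would solve equation (\ref{Eq2}) for $k\rho c$, hence for $k$, and substitute the just-obtained (\ref{c-ck}) for $c$; after cancellation this yields (\ref{k-ck}), again manifestly positive. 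At this point both formulae still contain $\xi$, and by (\ref{xi}) together with $\lambda^2=k/(\rho c)$ one has $\xi^2=\sigma^2\rho c/(\mu_\alpha k)$, so $\xi$ depends on the very unknowns being determined.

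The key step is to decouple $\xi$. Substituting (\ref{c-ck}) and (\ref{k-ck}) into $\xi^2=\sigma^2\rho c/(\mu_\alpha k)$, the factors $\xi^2$ and $f_\alpha(\xi)=1-W(-\xi,-\frac{\alpha}{2},1)$ cancel on both sides, leaving an equation in which $\xi$ appears only through $M_{\frac{\alpha}{2}}(\xi)^2$; taking positive square roots produces exactly equation (\ref{EqXi-ck}). Conversely, if $\xi>0$ solves (\ref{EqXi-ck}) and $c$, $k$ are defined by (\ref{c-ck})--(\ref{k-ck}), then one checks $\sigma^2\rho c/(\mu_\alpha k)=\xi^2$, so the definition (\ref{xi}) of $\xi$ is consistent, and a direct back-substitution shows that (\ref{Eq1}) and (\ref{Eq2}) hold; by Theorem \ref{ThCaracterizacion} the triplet $(T,c,k)$ with $T$ as in (\ref{T}) is then a solution of Problem (\ref{Problema}), with $c,k$ positive reals as demanded by Definition \ref{DefSol}.

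It remains to analyse (\ref{EqXi-ck}). The function $M_{\frac{\alpha}{2}}$ is continuous and, by the facts recalled in the proof of Lemma \ref{LeFGHProp}, strictly decreasing on $\mathbb{R}^+$ with $M_{\frac{\alpha}{2}}(+\infty)=0$; moreover the series (\ref{Wright})--(\ref{Mainardi}) give $M_{\frac{\alpha}{2}}(0^+)=W(0,-\frac{\alpha}{2},1-\frac{\alpha}{2})=1/\Gamma(1-\frac{\alpha}{2})=1/\Gamma(-\frac{\alpha}{2}+1)$. Hence $M_{\frac{\alpha}{2}}$ is a bijection from $\mathbb{R}^+$ onto $\bigl(0,\,1/\Gamma(-\frac{\alpha}{2}+1)\bigr)$, so (\ref{EqXi-ck}) has a (necessarily unique) positive solution if and only if its right-hand side lies in that interval. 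Positivity of the right-hand side is automatic, and the upper bound $\frac{\nu_\alpha\sigma\rho l\,\Gamma(\frac{\alpha}{2}+1)}{q_0\Gamma^2(-\frac{\alpha}{2}+1)}<\frac{1}{\Gamma(-\frac{\alpha}{2}+1)}$ is, using $\Gamma(-\frac{\alpha}{2}+1)=\Gamma(1-\frac{\alpha}{2})$, precisely inequality (\ref{Cond-ck}). The main obstacle, exactly as in Case 1, is handling the circular dependence of $\xi$ on the unknowns; what makes it work here is that after substitution the equation is homogeneous of degree two in $\xi$ outside the factor $M_{\frac{\alpha}{2}}(\xi)$, so $\xi$ cancels and (\ref{EqXi-ck}) becomes a genuine scalar equation for $\xi$ alone.
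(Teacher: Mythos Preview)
Your proof is correct and follows essentially the same route as the paper: isolate $c$ from (\ref{Eq1}) and $k$ from (\ref{Eq2}) (after substituting $c$), then eliminate the circular dependence by inserting (\ref{c-ck})--(\ref{k-ck}) into $\xi^2=\sigma^2\rho c/(\mu_\alpha k)$ to arrive at (\ref{EqXi-ck}), and finally use that $M_{\alpha/2}$ is strictly decreasing from $1/\Gamma(1-\tfrac{\alpha}{2})$ to $0$ to obtain the solvability criterion (\ref{Cond-ck}). You have in fact spelled out more detail than the paper---in particular the consistency/back-substitution check for the converse direction---but the underlying argument is the same.
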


\begin{proof}
By following the same steps as in the demonstration of the Theorem \ref{Thlc}, it can be shown that $c$ and $k$ must be given by (\ref{ck}), where the parameter $\xi$ should be a solution to equation (\ref{EqXi-ck}). Since the Mainardi function $M_{\frac{\alpha}{2}}$ is a strictly decreasing function from $\frac{1}{\Gamma\left(-\frac{\alpha}{2}+1\right)}$ to $0$ in $\mathbb{R}^+$ \cite{RoSa2013}, we have that the equation (\ref{EqXi-ck}) admits a solution if and only if its RHS is between $0$ and $\frac{1}{\Gamma\left(-\frac{\alpha}{2}+1\right)}$. This is equivalent to say that inequality (\ref{Cond-ck}) must holds. Moreover, when data satisfy (\ref{Cond-ck}), equation (\ref{EqXi-ck}) has an only one positive solution.
\end{proof}

\begin{theorem}[Case 3: Determination of $l$ and $k$]\label{Thlk}
If the moving boundary $s$ is given by (\ref{s}), then the Problem (\ref{Problema}) admits the solution $T$, $k$ and $l$ given by (\ref{T}) and:
\begin{subequations}\label{lk}
\begin{align}
\label{k-lk}&k=\frac{1}{\rho c}\left[\frac{q_0\sqrt{\mu_\alpha}\Gamma\left(-\frac{\alpha}{2}+1\right)\left[1-W\left(-\xi,-\frac{\alpha}{2},1\right)\right]}{T_0-T_m}\right]^2\\
\label{l-lk}&l=\frac{c(T_0-T_m)\Gamma\left(-\frac{\alpha}{2}+1\right)M_{\frac{\alpha}{2}}(\xi)}{\mu_\alpha\nu_\alpha\Gamma\left(\frac{\alpha}{2}+1\right)\xi\left[1-W\left(-\xi,-\frac{\alpha}{2},1\right)\right]}
\end{align}
\end{subequations}
respectively, $\xi$ being the only one solution to the equation:
\begin{equation}\label{EqXi-lk}
G_\alpha(x)=\frac{\sigma\rho c(T_0-T_m)}{\mu_\alpha q_0\Gamma\left(-\frac{\alpha}{2}+1\right)},\quad x>0.
\end{equation}
\end{theorem}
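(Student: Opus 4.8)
The plan is to argue exactly as in the proof of Theorem \ref{Thlc}, the one substantive novelty being that the parameter $\xi$ is now governed by $G_\alpha$ rather than by $F_\alpha$, a function whose range (by Lemma \ref{LeFGHProp}) is all of $\mathbb{R}^+$.

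First I would apply Theorem \ref{ThCaracterizacion}: with $s$ given by (\ref{s}) and $T$ by (\ref{T}), the triplet $(T,k,l)$ solves Problem (\ref{Problema}) if and only if the pair $(k,l)$ solves the system (\ref{SistEq}), where $\xi$ is given by (\ref{xi}) with $\lambda^2=k/(\rho c)$ and $\rho,c$ are data. Isolating $k$ from (\ref{Eq2}) gives formula (\ref{k-lk}) at once; substituting that value of $k$ into (\ref{Eq1}) and solving for $l$ yields (\ref{l-lk}). Both expressions still contain $\xi$, which through (\ref{xi}) depends on the unknown $k$, so it remains to pin down $\xi$ intrinsically.

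To do so I would combine (\ref{xi}) with $\lambda^2=k/(\rho c)$ to write $k=\sigma^2\rho c/(\mu_\alpha\xi^2)$, equate this with (\ref{k-lk}), clear denominators, and take positive square roots --- legitimate because every factor involved is strictly positive. This collapses to $\xi\bigl[1-W\bigl(-\xi,-\frac{\alpha}{2},1\bigr)\bigr]$ equal to the right-hand side of (\ref{EqXi-lk}), i.e.\ $G_\alpha(\xi)=\sigma\rho c(T_0-T_m)/(\mu_\alpha q_0\Gamma(-\frac{\alpha}{2}+1))$. Conversely, any positive root $\xi$ of (\ref{EqXi-lk}) feeds back through (\ref{k-lk}) and (\ref{l-lk}) to positive values of $k$ and $l$ (since $f_\alpha>0$ and $M_{\frac{\alpha}{2}}>0$ on $\mathbb{R}^+$), so every requirement in Definition \ref{DefSol}, including the positivity of the unknown coefficients, is met.

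The closing step --- and really the only point worth dwelling on --- is to count the solutions of (\ref{EqXi-lk}). By Lemma \ref{LeFGHProp} the function $G_\alpha$ is continuous and strictly increasing on $\mathbb{R}^+$ with $G_\alpha(0^+)=0$ and $G_\alpha(+\infty)=+\infty$, hence a bijection of $\mathbb{R}^+$ onto itself; since the right-hand side of (\ref{EqXi-lk}) is automatically a positive number for any admissible data, the equation has exactly one positive root. This is precisely why, unlike Cases 1 and 2 where $F_\alpha$ and $M_{\frac{\alpha}{2}}$ are bounded near the origin and a data inequality is forced, Theorem \ref{Thlk} requires no extra hypothesis on the data. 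I anticipate no real difficulty: the algebraic elimination leading to (\ref{k-lk}), (\ref{l-lk}) and (\ref{EqXi-lk}) is routine, and the only care needed is in justifying the square-root step and in recognising that the surjectivity of $G_\alpha$ dispenses with any compatibility condition.
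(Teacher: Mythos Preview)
Your proposal is correct and mirrors the paper's own argument: derive (\ref{k-lk}) and (\ref{l-lk}) from the system (\ref{SistEq}) exactly as in Theorems \ref{Thlc}--\ref{Thck}, then invoke the properties (\ref{GProp}) of $G_\alpha$ to conclude that (\ref{EqXi-lk}) has a unique positive root for any data, so no compatibility inequality is needed. One cosmetic remark: equation (\ref{Eq1}) does not involve $k$, so (\ref{l-lk}) follows from (\ref{Eq1}) directly rather than by substituting (\ref{k-lk}) into it; otherwise your outline is exactly what the paper does, only spelled out in more detail.
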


\begin{proof}
By proceeding analogously to the proofs of the previous Theorems \ref{Thlc} and \ref{Thck}, we have that $k$ and $l$ should be given by (\ref{lk}), $\xi$ being a solution to equation (\ref{EqXi-lk}). Since the RHS of the equation (\ref{EqXi-lk}) is a positive number, it follows from (\ref{GProp}) that the equation (\ref{EqXi-lk}) admits an only one solution for any set of data. 
\end{proof}

\begin{theorem}[Case 4: Determination of $c$ and $\rho$]\label{Thcrho}
If the moving boundary $s$ is given by (\ref{s}), then the Problem (\ref{Problema}) admits the solution $T$, $c$ and $\rho$ given by (\ref{T}), (\ref{c-ck}) and:
\begin{subequations}\label{crho}
\begin{align}
\label{rho-crho}&\rho=\frac{q_0^2\,\Gamma^3\left(-\frac{\alpha}{2}+1\right)\left[1-W\left(-\xi,-\frac{\alpha}{2},1\right)\right]M_{\frac{\alpha}{2}}(\xi)}{\nu_\alpha k l(T_0-T_m)\Gamma\left(\frac{\alpha}{2}+1\right)\xi}
\end{align}
\end{subequations}
respectively, $\xi$ being the only one solution to the equation (\ref{EqXi-lc}), if and only if inequality (\ref{Cond-lc}) holds.
\end{theorem}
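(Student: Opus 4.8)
The plan is to mirror the proofs of Theorems~\ref{Thlc}, \ref{Thck} and \ref{Thlk}. By Theorem~\ref{ThCaracterizacion}, the function $T$ in (\ref{T}) is a solution of Problem~(\ref{Problema}) with $c$ and $\rho$ unknown if and only if the pair $(c,\rho)$ solves the system (\ref{SistEq}); hence it suffices to analyse and solve that system regarding $k$ and $l$ as data. The one point needing care, exactly as in the previous cases, is that the auxiliary parameter $\xi$ of (\ref{xi}) depends on the two unknowns through $\lambda^{2}=k/(\rho c)$, so one must first show that $\xi$ is in fact pinned down by the data alone.

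First I would rewrite equation (\ref{Eq2}). From $\xi=\sigma/(\sqrt{\mu_\alpha}\lambda)$ and $\lambda^{2}=k/(\rho c)$ one gets $\xi^{2}=\sigma^{2}\rho c/(\mu_\alpha k)$, whence $\sqrt{k\rho c}=\sqrt{\mu_\alpha}\,k\,\xi/\sigma$; substituting this into (\ref{Eq2}) the product $\rho c$ cancels and (\ref{Eq2}) becomes $F_\alpha(\xi)=\dfrac{k(T_0-T_m)}{\sigma q_0\Gamma\!\left(-\frac{\alpha}{2}+1\right)}$, which is precisely equation (\ref{EqXi-lc}), now with a right-hand side involving only the data (since $k$ is known here). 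By property (\ref{FProp}) of Lemma~\ref{LeFGHProp}, $F_\alpha$ is a strictly decreasing bijection of $\mathbb{R}^{+}$ onto $\bigl(0,\,1/\Gamma\!\left(-\frac{\alpha}{2}+1\right)\bigr)$, so (\ref{EqXi-lc}) has a (necessarily unique) positive solution $\xi$ if and only if its right-hand side lies in that interval; positivity is automatic and the remaining inequality is exactly (\ref{Cond-lc}). This cancellation of $\rho c$ is the only slightly substantive step — once it is noticed, the rest is bookkeeping and identical in spirit to Cases~1--3; there is no genuine obstacle.

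Assuming (\ref{Cond-lc}) and letting $\xi$ denote this unique solution, I would then recover the coefficients in order. Solving (\ref{Eq1}) for $c$ — legitimate because $l$ is a datum — gives the expression (\ref{c-ck}). Next, from $\rho c=\mu_\alpha k\xi^{2}/\sigma^{2}$ (equivalently, isolating $\rho$ in (\ref{Eq2})) one obtains $\rho=\mu_\alpha k\xi^{2}/(\sigma^{2}c)$; inserting (\ref{c-ck}) and using (\ref{EqXi-lc}) to eliminate $\sigma$ in favour of $\xi$ and the data yields, after elementary simplification, formula (\ref{rho-crho}). Conversely, one checks by direct substitution that the triplet $(T,c,\rho)$ given by (\ref{T}), (\ref{c-ck}), (\ref{rho-crho}) satisfies (\ref{SistEq}) whenever $\xi$ solves (\ref{EqXi-lc}); thus Problem~(\ref{Problema}) admits a solution of similarity type of the stated form, and it is unique, if and only if (\ref{Cond-lc}) holds.
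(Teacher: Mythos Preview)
Your proof is correct and follows essentially the same approach as the paper, which merely states that the argument ``is similar to the demonstration of Theorem~\ref{Thlc}''. Your write-up is in fact more explicit: you first eliminate the unknowns from (\ref{Eq2}) via the identity $\sqrt{k\rho c}=\sqrt{\mu_\alpha}\,k\,\xi/\sigma$ to obtain (\ref{EqXi-lc}), then recover $c$ from (\ref{Eq1}) and $\rho$ from the relation $\rho c=\mu_\alpha k\xi^{2}/\sigma^{2}$, which is precisely the ``same steps'' the paper alludes to.
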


\begin{proof}
It is similar to the demonstration of Theorem \ref{Thlc}. 
\end{proof}

\begin{theorem}[Case 5: Determination of $l$ and $\rho$]\label{Thlrho}
If the moving boundary $s$ is given by (\ref{s}), then the Problem (\ref{Problema}) admits the solution $T$, $l$ and $\rho$ given by (\ref{T}), (\ref{l-lk}) and:
\begin{subequations}\label{lrho}
\begin{align}
\label{rho-lrho}&\rho=\frac{1}{k c}\left[\frac{q_0\sqrt{\mu_\alpha}\Gamma\left(-\frac{\alpha}{2}+1\right)\left[1-W\left(-\xi,-\frac{\alpha}{2},1\right)\right]}{T_0-T_m}\right]^2
\end{align}
\end{subequations}
respectively, $\xi$ being the only one solution to the equation (\ref{EqXi-lc}), if and only if inequality (\ref{Cond-lc}) holds.
\end{theorem}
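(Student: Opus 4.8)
The plan is to mirror the proof of Theorem~\ref{Thlc}, using Theorem~\ref{ThCaracterizacion} to reduce everything to the system~(\ref{SistEq}), and keeping in mind that in the present case the parameter $\xi$ of~(\ref{xi}) depends on the unknown $\rho$ through $\lambda^2=k/(\rho c)$, but is independent of the other unknown $l$.

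First I would isolate $\rho$ from~(\ref{Eq2}): since $\rho$ occurs there only inside $\sqrt{k\rho c}$, squaring and rearranging yields at once the formula~(\ref{rho-lrho}). Substituting~(\ref{rho-lrho}) into~(\ref{Eq1}) and solving for $l$ then gives~(\ref{l-lk}). Both expressions still involve $\xi$, which is not yet pinned down because it depends on $\rho$; to close the loop I would feed~(\ref{rho-lrho}) back into $\xi=\frac{\sigma}{\sqrt{\mu_\alpha}}\sqrt{\rho c/k}$. The decisive step is the algebraic cancellation: the factors $\sqrt{\mu_\alpha}$, $\sqrt{k}$ and $\sqrt{c}$ drop out, and what remains is $\xi=\frac{\sigma q_0\Gamma\left(-\frac{\alpha}{2}+1\right)}{k(T_0-T_m)}\bigl[1-W\left(-\xi,-\frac{\alpha}{2},1\right)\bigr]$, i.e. $F_\alpha(\xi)=\frac{k(T_0-T_m)}{\sigma q_0\Gamma\left(-\frac{\alpha}{2}+1\right)}$ in the notation~(\ref{F})--(\ref{f}). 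This is precisely equation~(\ref{EqXi-lc}), and, crucially, it involves neither $l$ nor the still-undetermined $\rho$.

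It then remains to discuss solvability of~(\ref{EqXi-lc}). By part~(\ref{FProp}) of Lemma~\ref{LeFGHProp}, $F_\alpha$ is continuous and strictly decreasing, mapping $\mathbb{R}^+$ bijectively onto $\left(0,\tfrac{1}{\Gamma\left(-\frac{\alpha}{2}+1\right)}\right)$; hence~(\ref{EqXi-lc}) has a solution, necessarily unique, if and only if its right-hand side lies in that interval. Positivity holds automatically, and the upper bound is exactly inequality~(\ref{Cond-lc}). When~(\ref{Cond-lc}) holds and $\xi$ denotes this unique root, one checks from~(\ref{rho-lrho}) and~(\ref{l-lk}) --- using $1-W\left(-\xi,-\frac{\alpha}{2},1\right)>0$ and $M_{\frac{\alpha}{2}}(\xi)>0$ --- that $\rho$ and $l$ are well-defined positive reals, so condition~5 of Definition~\ref{DefSol} is satisfied and Theorem~\ref{ThCaracterizacion} concludes that $(T,l,\rho)$ solves Problem~(\ref{Problema}). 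Conversely, if a solution exists then $\xi$ must satisfy~(\ref{EqXi-lc}), forcing~(\ref{Cond-lc}).

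I do not anticipate a real obstacle: the argument is structurally the one already used for Theorem~\ref{Thlc}, with the roles of the known $c$ and the unknown $\rho$ swapped and $l$ recovered from~(\ref{l-lk}) rather than~(\ref{l-lc}). The only point needing care is the cancellation when~(\ref{rho-lrho}) is substituted into the definition of $\xi$: because $\xi$ appears on both sides, one must verify that the identity genuinely collapses to~(\ref{EqXi-lc}) and, in particular, no longer contains the unknown $l$.
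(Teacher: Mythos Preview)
Your proposal is correct and follows exactly the approach the paper itself prescribes: the paper's entire proof reads ``It is similar to the demonstration of Theorem~\ref{Thlc},'' and you have filled in precisely those details, with the roles of $c$ and $\rho$ interchanged. One small harmless imprecision: equation~(\ref{Eq1}) does not contain $\rho$, so no substitution of~(\ref{rho-lrho}) is actually needed there --- one solves~(\ref{Eq1}) directly for $l$ to obtain~(\ref{l-lk}).
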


\begin{proof}
It is similar to the demonstration of Theorem \ref{Thlc}. 
\end{proof}

\begin{theorem}[Case 6: Determination of $\rho$ and $k$]\label{Thrhok}
If the moving boundary $s$ is given by (\ref{s}), then the Problem (\ref{Problema}) admits the solution $T$, $\rho$ and $k$ given by (\ref{T}) and:
\begin{subequations}\label{rhok}
\begin{align}
\label{rho-rhok}&\rho=\frac{q_0\mu_\alpha\Gamma\left(-\frac{\alpha}{2}+1\right)\xi\left[1-W\left(-\xi,-\frac{\alpha}{2},1\right)\right]}{\sigma c(T_0-T_m)}\\
\label{k-rhok}&k=\frac{\sigma q_0\Gamma\left(-\frac{\alpha}{2}+1\right)\left[1-W\left(-\xi,-\frac{\alpha}{2},1\right)\right]}{(T_0-T_m)\xi}
\end{align}
\end{subequations}
respectively, $\xi$ being the only one solution to the equation:
\begin{equation}\label{EqXi-rhok}
H_\alpha(x)=\frac{c(T_0-T_m)\Gamma\left(-\frac{\alpha}{2}+1\right)}{\mu_\alpha\nu_\alpha l\Gamma\left(\frac{\alpha}{2}+1\right)},\quad x>0.
\end{equation}
\end{theorem}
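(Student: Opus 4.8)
The plan is to argue exactly as in the proofs of Theorems~\ref{Thlc}--\ref{Thlrho}: invoke the characterization of Theorem~\ref{ThCaracterizacion} and solve the system (\ref{SistEq}) for the pair $(\rho,k)$, the key point being to decouple the similarity parameter $\xi$ from the two unknowns, determine $\xi$ on its own, and only then read off $\rho$ and $k$.

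First I would notice that, with $f_\alpha(x)=1-W(-x,-\tfrac{\alpha}{2},1)$ and $H_\alpha(x)=\dfrac{xf_\alpha(x)}{M_{\alpha/2}(x)}$, equation (\ref{Eq1}) is precisely $H_\alpha(\xi)=\dfrac{c(T_0-T_m)\Gamma(-\frac{\alpha}{2}+1)}{\mu_\alpha\nu_\alpha l\,\Gamma(\frac{\alpha}{2}+1)}$, that is, equation (\ref{EqXi-rhok}). In Case~6 both $l$ and $c$ are data, so the right-hand side of (\ref{EqXi-rhok}) is a fixed positive real number, and --- crucially --- the equation contains none of the unknowns (the thermal diffusivity $\lambda$ enters (\ref{Eq1}) only through $\xi$). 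By part (\ref{HProp}) of Lemma~\ref{LeFGHProp}, $H_\alpha$ is continuous and strictly increasing on $\mathbb{R}^+$ with $H_\alpha(0^+)=0$ and $H_\alpha(+\infty)=+\infty$, hence a bijection from $\mathbb{R}^+$ onto $\mathbb{R}^+$; therefore (\ref{EqXi-rhok}) has a unique positive solution $\xi$ for every admissible set of data, with no compatibility inequality needed (just as in Case~3).

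Once $\xi$ is fixed, I would recover the coefficients. Combining $\xi=\sigma/(\sqrt{\mu_\alpha}\lambda)$ with $\lambda^2=k/(\rho c)$ gives $k=\dfrac{\sigma^2\rho c}{\mu_\alpha\xi^2}$, which expresses $k$ through $\rho$ and known quantities. Substituting this into equation (\ref{Eq2}), rewritten as $f_\alpha(\xi)=\dfrac{\sqrt{k\rho c}\,(T_0-T_m)}{\sqrt{\mu_\alpha}\,q_0\Gamma(-\frac{\alpha}{2}+1)}$, leaves a single equation in $\rho$ whose positive root is (\ref{rho-rhok}); inserting it back into $k=\sigma^2\rho c/(\mu_\alpha\xi^2)$ produces (\ref{k-rhok}). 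It then remains to check admissibility in the sense of Definition~\ref{DefSol}: since $f_\alpha(\xi)>0$ and $M_{\alpha/2}(\xi)>0$ on $\mathbb{R}^+$ (recalled in the proof of Lemma~\ref{LeFGHProp}), both (\ref{rho-rhok}) and (\ref{k-rhok}) are quotients of strictly positive quantities, so $\rho>0$ and $k>0$, and $T$ in (\ref{T}) is well defined because $f_\alpha(\xi)\neq 0$.

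I do not anticipate a genuine obstacle here. The only thing demanding attention is the order of the argument --- $\xi$ must be pinned down first from (\ref{EqXi-rhok}), which is legitimate precisely because, with $l$ and $c$ prescribed, that equation's dependence on the data is fully explicit --- after which $\rho$ and $k$ follow by the same elementary algebra as in Theorems~\ref{Thlc}--\ref{Thlrho}.
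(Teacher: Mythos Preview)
Your proposal is correct and follows essentially the same approach as the paper: invoke Theorem~\ref{ThCaracterizacion}, reduce to the system (\ref{SistEq}), and use the monotonicity property (\ref{HProp}) of $H_\alpha$ from Lemma~\ref{LeFGHProp} to conclude that (\ref{EqXi-rhok}) has a unique positive solution for any data. The only cosmetic difference is the order: the paper (mimicking Theorem~\ref{Thlc}) first expresses $\rho$ and $k$ in terms of $\xi$ and then derives the equation for $\xi$, whereas you observe up front that (\ref{Eq1}) already \emph{is} the $\xi$-equation because neither $\rho$ nor $k$ appears in it --- a slightly cleaner route in this case, but the substance is identical.
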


\begin{proof}
By following the same ideas as in the proof of the Theorem \ref{Thlc}, we have that $\rho$ and $k$ must be given by (\ref{rhok}), where $\xi$ should be a solution to equation (\ref{EqXi-rhok}). By noting that the RHS of this equation is a positive number, it follows from (\ref{HProp}) that the equation (\ref{EqXi-rhok}) admits an only one positive solution for any set of data. 
\end{proof}

Table \ref{TbFrac} summarizes the formulae obtained for the two unknown thermal coefficients and the condition that data must verify to obtain them, for each one of the six possible choices of the two unknown thermal coefficients among $k$, $\rho$, $l$ and $c$ in Problem (\ref{Problema}) when the moving boundary $s$ is defined by (\ref{s}).

\newpage
\begin{table}[h!]
\caption{Formulae for the two unknown thermal coefficients and condition on data for\\ Problem (\ref{Problema})}\label{TbFrac}
\begin{tabular}{cllc}
\hline\\[-0.25cm]
Case & \hspace*{0.5cm}Formulae for the unknown & \hspace*{0.5cm}Equation for $\xi$ & Condition\\
&\hspace*{0.5cm}thermal coefficients& & for data\\[0.25cm]
\hline\\[-0.25cm]
1&
$c=\frac{1}{\rho k}\left[\frac{q_0\sqrt{\mu_\alpha}\Gamma\left(-\frac{\alpha}{2}+1\right)\left[1-W\left(-\xi,-\frac{\alpha}{2},1\right)\right]}{T_0-T_m}\right]^2$&
$F_\alpha(x)=\frac{k(T_0-T_m)}{\sigma q_0\Gamma\left(-\frac{\alpha}{2}+1\right)},\,\, x>0$&
$\frac{k(T_0-T_m)}{\sigma q_0}<1$\\[0.5cm]
& 
$l=\frac{q_0^2\Gamma^3\left(-\frac{\alpha}{2}+1\right)\left[1-W\left(-\xi,-\frac{\alpha}{2},1\right)\right]M_{\frac{\alpha}{2}}(\xi)}{\nu_\alpha\rho k(T_0-T_m)\Gamma\left(\frac{\alpha}{2}+1\right)\xi}$&\\[0.25cm]
\hline\\[-0.25cm]
2&
$c=\frac{\mu_\alpha\nu_\alpha l\Gamma\left(\frac{\alpha}{2}+1\right)\xi\left[1-W\left(-\xi,-\frac{\alpha}{2},1\right)\right]}{(T_0-T_m)\Gamma\left(-\frac{\alpha}{2}+1\right)M_{\alpha/2}(\xi)}$&
$M_{\frac{\alpha}{2}}(x)=\frac{\nu_\alpha\sigma\rho l\Gamma\left(\frac{\alpha}{2}+1\right)}{q_0\Gamma^2\left(-\frac{\alpha}{2}+1\right)},\,\, x>0$&
$\frac{\nu_\alpha\sigma\rho l\Gamma\left(\frac{\alpha}{2}+1\right)}{q_0\Gamma\left(-\frac{\alpha}{2}+1\right)}<1$\\[0.5cm]
& 
$k=\frac{q_0^2\Gamma^3\left(-\frac{\alpha}{2}+1\right)\left[1-W\left(-\xi,-\frac{\alpha}{2},1\right)\right]M_{\frac{\alpha}{2}}(\xi)}{\nu_\alpha\rho l(T_0-T_m)\Gamma\left(\frac{\alpha}{2}+1\right)\xi}$&\\[0.25cm]
\hline\\[-0.25cm]
3&
$k=\frac{1}{\rho c}\left[\frac{q_0\sqrt{\mu_\alpha}\Gamma\left(-\frac{\alpha}{2}+1\right)\left[1-W\left(-\xi,-\frac{\alpha}{2},1\right)\right]}{T_0-T_m}\right]^2$&
$G_\alpha(x)=\frac{\sigma\rho c(T_0-T_m)}{\mu_\alpha q_0\Gamma\left(-\frac{\alpha}{2}+1\right)},\,\, x>0$&
$-$\\[0.5cm]
& 
$l=\frac{c(T_0-T_m)\Gamma\left(-\frac{\alpha}{2}+1\right)M_{\frac{\alpha}{2}}(\xi)}{\mu_\alpha\nu_\alpha\Gamma\left(\frac{\alpha}{2}+1\right)\xi\left[1-W\left(-\xi,-\frac{\alpha}{2},1\right)\right]}$&\\[0.25cm]
\hline\\[-0.25cm]
4&
$c=\frac{\mu_\alpha\nu_\alpha l\Gamma\left(\frac{\alpha}{2}+1\right)\xi\left[1-W\left(-\xi,-\frac{\alpha}{2},1\right)\right]}{(T_0-T_m)\Gamma\left(-\frac{\alpha}{2}+1\right)M_{\alpha/2}(\xi)}$&
$F_\alpha(x)=\frac{k(T_0-T_m)}{\sigma q_0\Gamma\left(-\frac{\alpha}{2}+1\right)},\,\, x>0$&
$\frac{k(T_0-T_m)}{\sigma q_0}<1$\\[0.5cm]
& 
$\rho=\frac{q_0^2\Gamma^3\left(-\frac{\alpha}{2}+1\right)\left[1-W\left(-\xi,-\frac{\alpha}{2},1\right)\right]M_{\frac{\alpha}{2}}(\xi)}{\nu_\alpha k l(T_0-T_m)\Gamma\left(\frac{\alpha}{2}+1\right)\xi}$&\\[0.25cm]
\hline\\[-0.25cm]
5&
$l=\frac{c(T_0-T_m)\Gamma\left(-\frac{\alpha}{2}+1\right)M_{\alpha/2}(\xi)}{\mu_\alpha\nu_\alpha\Gamma\left(\frac{\alpha}{2}+1\right)\xi\left[1-W\left(-\xi,-\frac{\alpha}{2},1\right)\right]}$&
$F_\alpha(x)=\frac{k(T_0-T_m)}{\sigma q_0\Gamma\left(-\frac{\alpha}{2}+1\right)},\,\, x>0$&
$\frac{k(T_0-T_m)}{\sigma q_0}<1$\\[0.5cm]
& 
$\rho=\frac{1}{k c}\left[\frac{q_0\sqrt{\mu_\alpha}\Gamma\left(-\frac{\alpha}{2}+1\right)\left[1-W\left(-\xi,-\frac{\alpha}{2},1\right)\right]}{T_0-T_m}\right]^2$&\\[0.25cm]
\hline\\[-0.25cm]
6&
$\rho=\frac{\mu_\alpha q_0\Gamma\left(-\frac{\alpha}{2}+1\right)\xi\left[1-W\left(-\xi,-\frac{\alpha}{2},1\right)\right]}{\sigma c(T_0-T_m)}$&
$H_\alpha(x)=\frac{c(T_0-T_m)\Gamma\left(-\frac{\alpha}{2}+1\right)}{\mu_\alpha\nu_\alpha\Gamma\left(\frac{\alpha}{2}+1\right)},\,\, x>0$&
$-$\\[0.5cm]
& 
$k=\frac{\sigma q_0\Gamma\left(-\frac{\alpha}{2}+1\right)\left[1-W\left(-\xi,-\frac{\alpha}{2},1\right)\right]}{(T_0-T_m)\xi}$&\\[0.25cm]
\hline
\end{tabular}
\end{table}

\newpage
\section{Convergence to the classic case when $\alpha\to 1^-$}
When $\alpha=1$, the time fractional derivative of order $\alpha$ in the Caputo sense of a function coincides with its classical time derivative. Then, if we allow $\alpha$ to be equal to 1 in Problem (\ref{Problema}) and we consider that case, we obtain that Problem (\ref{Problema}) reduces to the classical inverse one-phase Stefan problem studied in \cite{Ta1983}. This problem, which we will refer to as Problem (\ref{Problema}$^\star$), is given by the classical diffusion equation:
\begin{equation}\label{ED}
T_t(x,t)=\lambda^2T_{xx}(x,t),\quad 0<x<s(t),\,t>0,\tag{\ref{1}$^\star$}
\end{equation} 
the classical Stefan condition:
\begin{equation}\label{CondStefan}
-kT_x(s(t),t)=\rho l\dot{s}(t),\quad t>0\tag{\ref{3}$^\star$}
\end{equation}
and conditions (\ref{2}), (\ref{4}) and (\ref{5}). Of course, to obtain (\ref{ED}) and (\ref{CondStefan}) we have also considered $\mu_\alpha=1$ and $\nu_\alpha=1$ in (\ref{1}) and (\ref{3}), respectively. According to the physical interpretation given in \cite{VoFaGa2013}, as $\alpha$ increases its value to 1, the ability of memory retention of the system diminishes to the limit case of no memory retention corresponding to $\alpha=1$. In this context, classical Stefan problems are representing phase-change processes  whose temporal evolution can be described in terms of {\em local in time} properties ({\em absence of memory}). 

The determination of two unknown thermal coefficients through a classical inverse one-phase Stefan problem was done in \cite{Ta1983}. In that article, necessary and sufficient conditions on data to obtain existence and uniqueness of solution to Problem (\ref{Problema}$^\star$) are given, together with formulae for the unknown thermal coefficients. In several articles \cite{RoSa2013,RoSa2014,RoTa2014,Ta2015-c} it has been proved the convergence when $\alpha\to 1^-$ of the solution to a fractional Stefan problem with $0<\alpha<1$ to the solution to the associated classical problem obtained by considering $\alpha=1$. Encouraged by those works, we are interested in this section in proving the convergence when $\alpha\to 1^-$ of the results obtained in Section \ref{SeFormulas} to the ones given in \cite{Ta1983}. 

In order to emphasize the dependence on $\alpha$ of the formulae given in Theorems \ref{Thlc} to \ref{Thrhok}, we will mention it here explicitly. For example, if we are analysing the convergence of the solution to Problem (\ref{Problema}) given in Theorem \ref{Thlc}, we will refer to it as $T(x,t,\alpha)$, $l(\alpha)$ and $c(\alpha)$. We will also write $\xi(\alpha)$ to represent the coefficient defined by (\ref{xi}).

The next result will be useful in the following:

\begin{lemma}\label{LeWMProp}
For each $x>0$, the Wright and Mainardi functions are such that:
\begin{subequations}\label{WMProp}
\begin{align}
\label{WProp}&1-W\left(-x,-\frac{\alpha}{2},1\right)\to\erf\left(\frac{x}{2}\right),&\text{when }\alpha &\to  1^-\\
\label{MProp}&M_{\alpha/2}(x)\to \frac{1}{\sqrt{\pi}}\exp\left(-\frac{x^2}{4}\right),&\text{when } \alpha &\to  1^-,
\end{align}
\end{subequations}
where $erf$ is the error function, which is defined by:
\begin{equation*}
\erf(x)=\frac{2}{\sqrt{\pi}}\displaystyle\int_0^x\exp(s^2)ds,\quad x>0.
\end{equation*}
\end{lemma}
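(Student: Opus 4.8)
The plan is to establish each limit in \eqref{WMProp} from the series definition \eqref{Wright} of the Wright function, treating the two cases separately but by the same mechanism.

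\medskip

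\noindent\textbf{Proof of \eqref{WProp}.} First I would write out, using \eqref{Wright},
\[
1-W\left(-x,-\tfrac{\alpha}{2},1\right)=-\sum_{k=1}^{+\infty}\frac{(-x)^k}{k!\,\Gamma\!\left(-\tfrac{\alpha}{2}k+1\right)}=\sum_{k=1}^{+\infty}\frac{(-1)^{k+1}x^k}{k!\,\Gamma\!\left(1-\tfrac{\alpha}{2}k\right)},
\]
so that for $\alpha=1$ this becomes $\sum_{k\ge 1}\frac{(-1)^{k+1}x^k}{k!\,\Gamma\!\left(1-k/2\right)}$. The goal is to identify this with the Taylor series of $\erf(x/2)$ about the origin. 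Recall $\erf(z)=\frac{2}{\sqrt\pi}\sum_{j\ge 0}\frac{(-1)^j z^{2j+1}}{j!\,(2j+1)}$; substituting $z=x/2$ and reindexing gives a power series in $x$ whose odd-$k$ coefficients must be matched against those above, while the even-$k$ coefficients on the left vanish because $\Gamma\!\left(1-k/2\right)$ has a pole at every even $k\ge 2$ (i.e. $1/\Gamma$ is zero there). For the odd terms $k=2j+1$ one uses the reflection/duplication identity for $\Gamma$ to rewrite $1/\Gamma\!\left(1-\tfrac{2j+1}{2}\right)=1/\Gamma\!\left(\tfrac12-j\right)$ in the form $\tfrac{(-1)^j\, j!\,4^j}{(2j)!\,\sqrt\pi}$, and then a short manipulation of factorials shows the two coefficient sequences agree. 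To pass from pointwise identity at $\alpha=1$ to the limit $\alpha\to 1^-$ I would invoke continuity: the series \eqref{Wright} for $a=-\alpha/2\in(-1,-1/4]$, say, converges locally uniformly in $(z,a)$ (it is an entire function of $z$ with coefficients continuous in $a$, and $1/\Gamma$ is entire), so $\alpha\mapsto W(-x,-\alpha/2,1)$ is continuous on $(0,1]$ for each fixed $x>0$; hence the limit as $\alpha\to1^-$ equals the value at $\alpha=1$, which we just identified with $\erf(x/2)$.

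\medskip

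\noindent\textbf{Proof of \eqref{MProp}.} By the definition \eqref{Mainardi}, $M_{\alpha/2}(x)=W\!\left(-x,-\tfrac{\alpha}{2},1-\tfrac{\alpha}{2}\right)=\sum_{k\ge 0}\frac{(-x)^k}{k!\,\Gamma\!\left(1-\tfrac{\alpha}{2}(k+1)\right)}$. Again the map $\alpha\mapsto M_{\alpha/2}(x)$ is continuous on $(0,1]$ for each fixed $x>0$ by the same local-uniform-convergence argument, so it suffices to evaluate at $\alpha=1$: $M_{1/2}(x)=\sum_{k\ge 0}\frac{(-x)^k}{k!\,\Gamma\!\left(\tfrac{1-k}{2}\right)}$. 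One then checks this equals $\frac{1}{\sqrt\pi}e^{-x^2/4}$: the even terms $k=2j$ give, after using $1/\Gamma\!\left(\tfrac12-j\right)=\tfrac{(-1)^j j!\,4^j}{(2j)!\sqrt\pi}$, exactly the coefficient $\tfrac{1}{\sqrt\pi}\cdot\tfrac{(-1)^j}{j!\,4^j}$ of $x^{2j}$ in $\tfrac{1}{\sqrt\pi}e^{-x^2/4}$, while the odd terms $k=2j+1$ vanish because $\Gamma\!\left(-j\right)$ has a pole. (Alternatively, this is the well-known special value of the Mainardi function, cited in \cite{Ma1995,GoLuMa1999}, and one may simply quote it.)

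\medskip

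\noindent The main obstacle is purely bookkeeping with the Gamma function: verifying that the reciprocal-Gamma factors at half-integer and non-positive-integer arguments reproduce the correct coefficients of $\erf(x/2)$ and $e^{-x^2/4}$, i.e. getting the duplication-formula constants and the alternating signs exactly right. The continuity-in-$\alpha$ step is routine once one notes that $1/\Gamma$ is entire and the Wright series converges locally uniformly in both arguments; no delicate analysis is needed there.
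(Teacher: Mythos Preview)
Your argument is correct: the series manipulations and Gamma-function identities you outline do verify $M_{1/2}(x)=\tfrac{1}{\sqrt\pi}e^{-x^2/4}$ and $1-W(-x,-\tfrac12,1)=\erf(x/2)$, and the continuity of $\alpha\mapsto W(-x,-\alpha/2,b)$ on $(0,1]$ (from local uniform convergence of \eqref{Wright} and the fact that $1/\Gamma$ is entire) legitimately upgrades evaluation at $\alpha=1$ to the one-sided limit. The paper, by contrast, does not prove this lemma at all: its entire proof is the citation ``See \cite{RoSa2013}'', deferring both limits to that reference. So your route is genuinely different in that it is self-contained---you trade the appeal to an external source for a direct power-series identification requiring the reflection/duplication identities for $\Gamma$ at half-integers. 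What you gain is a proof that lives inside the present paper; what the citation buys is brevity and avoidance of the coefficient bookkeeping you yourself flag as the main obstacle. One small cosmetic point: the interval you wrote for $a=-\alpha/2$ should be $[-\tfrac12,0)$ rather than $(-1,-\tfrac14]$, though this does not affect the argument.
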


\begin{proof}
See \cite{RoSa2013}.
\end{proof}  

\begin{theorem}[Convergence related to Case 1]\label{ThlcClassic}
If inequality (\ref{Cond-lc}) holds, then the solution $T(x,t,\alpha)$, $l(\alpha)$, $c(\alpha)$ to Problem (\ref{Problema}) given in Theorem \ref{Thlc} converges to the solution obtained in \cite{Ta1983}, which is given by:
\begin{subequations}\label{lcClassic}
\begin{align}
\label{T-Classic}&T(x,t)=T_0+\frac{T_0-T_m}{\erf\left(\frac{\sigma^\star}{\lambda}\right)}\erf\left(\frac{x}{2\lambda\sqrt{t}}\right),\quad 0<x<s(t),\,t>0\\
\label{c-lcClassic}&c=\frac{k}{\rho}\left(\frac{\xi^\star}{\sigma^\star}\right)^2\\
\label{l-lcClassic}&l=\frac{q_0\exp({-\xi^\star}^2)}{\rho\sigma^\star}
\end{align}
\end{subequations}
where $\sigma^\star$ is defined by:
\begin{equation}\label{sigmaStar}
\sigma^\star=\frac{\sigma}{2}
\end{equation}
and $\xi^\star$ is the only one solution to the equation:
\begin{equation}\label{EqXi-lcClassic}
\frac{\erf(x)}{x}=\frac{k(T_0-T_m)}{q_0\sigma^\star\sqrt{\pi}},\quad x>0.
\end{equation}
\end{theorem}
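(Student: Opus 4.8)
The plan is to pass to the limit $\alpha\to 1^-$ in each of the formulae furnished by Theorem~\ref{Thlc} — in equation (\ref{EqXi-lc}) defining $\xi(\alpha)$, in the expressions (\ref{c-lc})--(\ref{l-lc}) for $c(\alpha)$ and $l(\alpha)$, and in (\ref{T}) for $T(x,t,\alpha)$ — and to identify these limits with the classical formulae (\ref{c-lcClassic}), (\ref{l-lcClassic}) and (\ref{T-Classic}). Since inequality (\ref{Cond-lc}) does not involve $\alpha$, it simultaneously guarantees, via Theorem~\ref{Thlc}, that the fractional solution exists for every $\alpha\in(0,1)$, and that equation (\ref{EqXi-lcClassic}) has a unique positive root $\xi^\star$ (the map $x\mapsto\erf(x)/x$ being strictly decreasing from $\tfrac{2}{\sqrt{\pi}}$ to $0$). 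Throughout I would use the elementary limits $\mu_\alpha\to 1$, $\nu_\alpha\to 1$ (hypothesis (\ref{munu-limit})), $\Gamma(1-\tfrac{\alpha}{2})\to\sqrt{\pi}$ and $\Gamma(1+\tfrac{\alpha}{2})\to\tfrac{\sqrt{\pi}}{2}$ (continuity of $\Gamma$), and $t^{\alpha/2}\to\sqrt{t}$ for fixed $t>0$.

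First I would show that $\xi(\alpha)\to 2\xi^\star$. By (\ref{WProp}) and the definitions (\ref{F})--(\ref{f}), for each fixed $x>0$ one has $F_\alpha(x)=\tfrac{f_\alpha(x)}{x}\to\tfrac{\erf(x/2)}{x}$, while the right-hand side of (\ref{EqXi-lc}) tends to $\tfrac{k(T_0-T_m)}{\sigma q_0\sqrt{\pi}}$. Each $F_\alpha$ is strictly decreasing (Lemma~\ref{LeFGHProp}) and the limit $x\mapsto\erf(x/2)/x$ is continuous and strictly decreasing, which lets me run a squeeze at the level of roots: if $\xi_1$ solves $\tfrac{\erf(\xi_1/2)}{\xi_1}=\tfrac{k(T_0-T_m)}{\sigma q_0\sqrt{\pi}}$, then for any $\varepsilon>0$ the strict inequalities obtained by evaluating the limit function at $\xi_1\mp\varepsilon$ persist for $F_\alpha$ and for the right-hand side of (\ref{EqXi-lc}) once $\alpha$ is close enough to $1$, and the monotonicity of $F_\alpha$ then forces $\xi_1-\varepsilon<\xi(\alpha)<\xi_1+\varepsilon$. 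Writing $\sigma^\star=\sigma/2$ (see (\ref{sigmaStar})) and substituting $u=\xi_1/2$ turns the equation for $\xi_1$ into (\ref{EqXi-lcClassic}), so $\xi_1=2\xi^\star$.

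Next I would pass to the limit in the remaining quantities. The delicate point is that in $1-W(-\xi(\alpha),-\tfrac{\alpha}{2},1)$, in $M_{\alpha/2}(\xi(\alpha))$, and in $1-W\big(-\tfrac{x}{\sqrt{\mu_\alpha}\lambda t^{\alpha/2}},-\tfrac{\alpha}{2},1\big)=1-W\big(-\tfrac{x\,\xi(\alpha)}{\sigma\,t^{\alpha/2}},-\tfrac{\alpha}{2},1\big)$ (using (\ref{xi})) both the order $\alpha/2$ and the argument move with $\alpha$, so the pointwise statements (\ref{WProp})--(\ref{MProp}) are not immediately applicable. I would instead sandwich: since $\xi(\alpha)\to 2\xi^\star$ and $f_\alpha$ is increasing while $M_{\alpha/2}$ is decreasing (properties recalled in the proof of Lemma~\ref{LeFGHProp}), bounding $f_\alpha(\xi(\alpha))$ between $f_\alpha(2\xi^\star\mp\varepsilon)$, letting $\alpha\to 1^-$ and then $\varepsilon\to 0$, and invoking the continuity of $\erf$, yields $1-W(-\xi(\alpha),-\tfrac{\alpha}{2},1)\to\erf(\xi^\star)$; the same device gives $M_{\alpha/2}(\xi(\alpha))\to\tfrac{1}{\sqrt{\pi}}\exp(-(\xi^\star)^2)$ and, since $\tfrac{x\,\xi(\alpha)}{\sigma\,t^{\alpha/2}}\to\tfrac{x\,\xi^\star}{\sigma^\star\sqrt{t}}$, also $1-W\big(-\tfrac{x}{\sqrt{\mu_\alpha}\lambda t^{\alpha/2}},-\tfrac{\alpha}{2},1\big)\to\erf\!\big(\tfrac{x\xi^\star}{2\sigma^\star\sqrt{t}}\big)$. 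Substituting these limits together with the elementary ones into (\ref{c-lc}), (\ref{l-lc}) and (\ref{T}), and simplifying with the identity $\erf(\xi^\star)=\tfrac{k(T_0-T_m)\xi^\star}{q_0\sigma^\star\sqrt{\pi}}$ read off (\ref{EqXi-lcClassic}) — and, for the temperature, the equivalent relation $\sigma^\star/\lambda=\xi^\star$, which is just formula (\ref{c-lcClassic}) rewritten — one recovers exactly (\ref{c-lcClassic}), (\ref{l-lcClassic}) and (\ref{T-Classic}).

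The hard part will be the interchange of limits in this last step: one cannot simply set $\alpha=1$ inside $W(\cdot,-\tfrac{\alpha}{2},1)$ or $M_{\alpha/2}(\cdot)$, because the evaluation point $\xi(\alpha)$ — and, for the temperature, the similarity variable — also depends on $\alpha$. The remedy is to upgrade the pointwise convergence of Lemma~\ref{LeWMProp} to locally uniform convergence on $\mathbb{R}^+$, which is legitimate since the functions involved are monotone with continuous monotone limits (a Dini/P\'olya-type argument made available by Lemma~\ref{LeFGHProp}); the remaining manipulations are routine.
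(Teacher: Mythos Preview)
Your proposal is correct and follows essentially the same route as the paper: pass to the limit $\alpha\to1^{-}$ in equation~(\ref{EqXi-lc}) to obtain $\xi(\alpha)\to2\xi^\star$, then carry this into (\ref{c-lc}), (\ref{l-lc}) and (\ref{T}) via Lemma~\ref{LeWMProp} and (\ref{munu-limit}) to recover (\ref{c-lcClassic}), (\ref{l-lcClassic}) and (\ref{T-Classic}). The paper is in fact terser than you are --- it simply asserts (\ref{xi-lcLimit}) and then says the remaining convergences follow ``from elementary computations'', without isolating the moving-argument issue that you flag. Your sandwiching argument (using the monotonicity of $f_\alpha$ and $M_{\alpha/2}$ from Lemma~\ref{LeFGHProp} together with continuity of the limit functions) is a legitimate way to make rigorous what the paper leaves implicit; the rewriting of the similarity variable via (\ref{xi}) and the identification $\sigma^\star/\lambda=\xi^\star$ from (\ref{c-lcClassic}) are also exactly the computations that underlie the paper's claim for $T(x,t,\alpha)\to T(x,t)$.
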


\begin{proof}
Taking the limit when $\alpha\to 1^-$ into both sides of the equation (\ref{EqXi-lc}) and using (\ref{MProp}), we obtain the following equation:
\begin{equation}\label{EqXi-lcClassicPrelim}
\frac{\erf(x/2)}{x}=\frac{k(T_0-T_m)}{q_0\sigma \sqrt{\pi}},\quad x>0.
\end{equation}
On one hand, we have that the LHS of the equation (\ref{EqXi-lcClassicPrelim}) defines a strictly decreasing function from $\frac{1}{\sqrt{\pi}}$ to $0$ in $\mathbb{R}^+$. On the other hand, since inequality (\ref{Cond-lc}) holds, the RHS of equation (\ref{Cond-lc}) is between $0$ and $\frac{1}{\sqrt{\pi}}$. Therefore, it follows that equation (\ref{EqXi-lcClassicPrelim}) has an only one positive solution. By introducing the parameter $\sigma^\star$ defined by (\ref{sigmaStar}), we can rewrite equation (\ref{EqXi-lcClassicPrelim}) as follows:
\begin{equation*}
\frac{\erf(x/2)}{x/2}=\frac{k(T_0-T_m)}{q_0\sigma^\star \sqrt{\pi}},\quad x>0,
\end{equation*}
and see that the solution $\xi(\alpha)$ to the equation (\ref{EqXi-lc}) is such that:
\begin{equation}\label{xi-lcLimit}
\xi(\alpha)\to 2\xi^\star,\quad\quad\text{when }\alpha\to 1^-
\end{equation}
$\xi^\star$ being the only one solution to the equation (\ref{EqXi-lcClassic}). From (\ref{munu-limit}), (\ref{WMProp}) and (\ref{xi-lcLimit}), it follows from elementary computations that:
\begin{subequations}\label{lc-limit}
\begin{align}
&c(\alpha)\to c,\quad\quad\text{when }\alpha\to 1^-\\
&l(\alpha)\to l,\quad\quad\,\,\text{when }\alpha\to 1^-
\end{align}
\end{subequations}
where $c$ and $l$ are given by (\ref{c-lcClassic}) and (\ref{l-lcClassic}), respectively. Finally, it follows from (\ref{mu-limit}), (\ref{WMProp}), (\ref{xi-lcLimit}) and (\ref{lc-limit}) that:
\begin{equation}\label{T-limit}
T(x,t,\alpha)\to T(x,t),\quad\quad\text{when }\alpha\to 1^-,
\end{equation} 
for each pair $(x,t)$ with $0<x<s(t)$ and $t>0$, where $T(x,t)$ is given by (\ref{T-Classic}). We then have proved that the solution to Problem (\ref{Problema}) given in Theorem \ref{Thlc} converges to the solution to Problem (\ref{Problema}$^\star$) given in \cite{Ta1983} when $\alpha\to 1^-$.
\end{proof}

\begin{remark}
We note that inequality (\ref{Cond-lc}) can be written as:
\begin{equation}\label{Cond-ckClassic}
\frac{k(T_0-T_m)}{2\sigma^\star q_0}<1,
\end{equation}
$\sigma^\star$ being the parameter defined by (\ref{sigmaStar}), which is the condition established in \cite{Ta1983} to ensure the existence and uniqueness of the solution to Problem (\ref{Problema}$^\star$) given by (\ref{lcClassic}).
\end{remark}

\begin{theorem}[Convergence related to Case 2]\label{ThckClassic}
If inequality (\ref{Cond-ck}) holds for each $\alpha\in(0,1)$, then the solution $T(x,t,\alpha)$, $c(\alpha)$, $k(\alpha)$ to Problem (\ref{Problema}) given in Theorem \ref{Thck} converges to the solution obtained in \cite{Ta1983}, which is given by (\ref{T-Classic}) and:
\begin{subequations}\label{ckClassic}
\begin{align}
\label{c-ckClassic}&c=\frac{q_0\sqrt{\pi}\xi^\star\erf(\xi^\star)}{\rho\sigma^\star(T_0-T_m)}\\
\label{k-ckClassic}&k=\frac{\sigma^\star q_0\sqrt{\pi}\erf(\xi^\star)}{(T_0-T_m)\xi^\star}
\end{align}
\end{subequations}
where $\sigma^\star$ is given by (\ref{sigmaStar}) and $\xi^\star$ is the only one solution to the equation:
\begin{equation}\label{EqXi-ckClassic}
\exp(x^2)=\frac{q_0}{\rho l\sigma^\star},\quad x>0.
\end{equation}
\end{theorem}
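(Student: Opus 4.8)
The plan is to mirror, step for step, the proof of Theorem~\ref{ThlcClassic}: first identify the equation satisfied by the limit of $\xi(\alpha)$, and then pass to the limit in the explicit formulae. First I would let $\alpha\to1^-$ in the characterizing equation~\eqref{EqXi-ck}. On the left-hand side, \eqref{MProp} of Lemma~\ref{LeWMProp} gives $M_{\alpha/2}(x)\to\frac{1}{\sqrt{\pi}}\exp(-x^2/4)$; on the right-hand side, using \eqref{nu-limit} together with $\Gamma\left(\tfrac{\alpha}{2}+1\right)\to\Gamma\left(\tfrac{3}{2}\right)=\tfrac{\sqrt{\pi}}{2}$ and $\Gamma\left(-\tfrac{\alpha}{2}+1\right)\to\Gamma\left(\tfrac{1}{2}\right)=\sqrt{\pi}$, the right-hand side tends to $\dfrac{\sigma\rho l}{2q_0\sqrt{\pi}}$. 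Hence the limiting equation reads $\dfrac{1}{\sqrt{\pi}}\exp(-x^2/4)=\dfrac{\sigma\rho l}{2q_0\sqrt{\pi}}$, that is, with $\sigma^\star=\sigma/2$ as in~\eqref{sigmaStar},
\[
\exp(-x^2/4)=\frac{\rho l\,\sigma^\star}{q_0},\qquad x>0.
\]

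Next I would pin down the root of this equation. Since $x\mapsto\exp(-x^2/4)$ is strictly decreasing from $1$ to $0$ on $\mathbb{R}^+$, there is a unique positive root precisely when $0<\rho l\sigma^\star/q_0<1$; and letting $\alpha\to1^-$ in the hypothesis~\eqref{Cond-ck} yields $\rho l\sigma^\star/q_0\le1$, the strict inequality — which is exactly the condition imposed in \cite{Ta1983} for the classical solution to exist — coming from the fact that~\eqref{Cond-ck} is assumed to hold for \emph{every} $\alpha\in(0,1)$. The substitution $y=x/2$ turns the displayed equation into $\exp(y^2)=q_0/(\rho l\sigma^\star)$, i.e.\ equation~\eqref{EqXi-ckClassic}, so its root is $2\xi^\star$. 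A routine argument — local uniform convergence of the left-hand side of~\eqref{EqXi-ck} as $\alpha\to1^-$ together with the strict monotonicity of the limit — then gives $\xi(\alpha)\to2\xi^\star$.

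Finally I would substitute this limit into the explicit formulae. Using \eqref{munu-limit}, \eqref{WProp} (so that $1-W(-\xi(\alpha),-\alpha/2,1)\to\erf(\xi^\star)$), \eqref{MProp} (so that $M_{\alpha/2}(\xi(\alpha))\to\frac{1}{\sqrt{\pi}}\exp({-{\xi^\star}^2})$), and the Gamma values above, elementary computations turn \eqref{c-ck} and \eqref{k-ck} into $c(\alpha)\to c$ and $k(\alpha)\to k$; here the identity $\exp({\xi^\star}^2)=q_0/(\rho l\sigma^\star)$ read off from~\eqref{EqXi-ckClassic} is what rewrites the resulting expressions in the forms~\eqref{c-ckClassic} and~\eqref{k-ckClassic}. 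In the same way, passing to the limit in~\eqref{T} with $\mu_\alpha\to1$, $t^{\alpha/2}\to t^{1/2}$, and $\xi(\alpha)/2\to\xi^\star=\sigma^\star/\lambda$ recovers~\eqref{T-Classic} pointwise on $\{0<x<s(t),\,t>0\}$, which completes the argument.

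The delicate point is not in these substitutions, which are identical in spirit to those in the proof of Theorem~\ref{ThlcClassic}, but in the two passages to the limit: making $\xi(\alpha)\to2\xi^\star$ rigorous (control of the root of an $\alpha$-dependent transcendental equation) and, closely related, justifying that the pointwise-in-$x$ asymptotics of Lemma~\ref{LeWMProp} may be evaluated at the moving argument $x=\xi(\alpha)$; one must also confirm that the limiting form of~\eqref{Cond-ck} is the strict inequality $\rho l\sigma^\star/q_0<1$ required for the classical solution of \cite{Ta1983}. Once these are in hand, the remainder is bookkeeping.
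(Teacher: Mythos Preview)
Your proposal is correct and follows essentially the same route as the paper's own proof: take $\alpha\to1^-$ in \eqref{EqXi-ck} to obtain the limiting transcendental equation, rewrite it via $\sigma^\star=\sigma/2$ as \eqref{EqXi-ckClassic}, conclude $\xi(\alpha)\to2\xi^\star$, and then pass to the limit in \eqref{c-ck}, \eqref{k-ck} and \eqref{T}. Your treatment is in fact slightly more careful than the paper's, which does not explicitly discuss the passage from ``\eqref{Cond-ck} for every $\alpha$'' to the strict inequality $\rho l\sigma^\star/q_0<1$, nor the rigorous justification of root convergence; the paper simply asserts these steps at the same heuristic level you adopt.
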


\begin{proof}
If we take the limit when $\alpha\to 1^-$ side by side of equation (\ref{EqXi-ck}) and we have into consideration (\ref{nu-limit}) and (\ref{MProp}), the following equation is obtained:
\begin{equation}\label{EqXi-ckClassicPrelim}
\exp\left(\frac{x^2}{4}\right)=\frac{2q_0}{\sigma\rho l},\quad x>0.
\end{equation}
Since inequality (\ref{Cond-ck}) holds for all $\alpha\in(0,1)$, we have that the following inequality also holds:
\begin{equation}\label{Cond-ckClassicPrelim}
\frac{2q_0}{\sigma\rho l}< 1.
\end{equation}
Therefore, equation (\ref{EqXi-ckClassicPrelim}) admits only one positive solution. We note that equation (\ref{EqXi-ckClassicPrelim}) can be rewritten as:
\begin{equation*}
\exp\left(\left(\frac{x}{2}\right)^2\right)=\frac{q_0}{\sigma^\star\rho l},\quad x>0,
\end{equation*}
where $\sigma^\star$ is given by (\ref{sigmaStar}), from which we can see that the solution $\xi(\alpha)$ to equation (\ref{EqXi-ck}) is such that:
\begin{equation}\label{xi-ckLimit}
\xi(\alpha)\to 2\xi^\star,\quad\quad\text{when }\alpha\to 1^-,
\end{equation}
$\xi^\star$ being the only one solution to equation (\ref{EqXi-ckClassic}). It follows now from (\ref{munu-limit}) (\ref{WMProp}), (\ref{xi-ckLimit}) and elementary computations that:
\begin{subequations}\label{ck-limit}
\begin{align}
&c(\alpha)\to c\\
&k(\alpha)\to k
\end{align}
\end{subequations}
where $c$ and $k$ are given by (\ref{ck}), respectively. Finally, we have from (\ref{mu-limit}), (\ref{WMProp}), (\ref{xi-ckLimit}) and (\ref{ck-limit}) that $T(x,t,\alpha)$ satisfies (\ref{T-limit}).
\end{proof}

\begin{remark}
By introducing the parameter $\sigma^\star$ defined by (\ref{sigmaStar}), we have that the inequality (\ref{Cond-ckClassicPrelim}) can be rewritten as:
\begin{equation}\label{Cond-ckClassic}
\frac{q_0}{\rho l\sigma^\star}>1,
\end{equation}
which is the condition established in \cite{Ta1983} to ensure the existence and uniqueness of the solution to Problem (\ref{Problema}$^\star$) given by (\ref{T-Classic}) and (\ref{ckClassic}).
\end{remark}

\begin{theorem}[Convergence related to Case 3]\label{ThlkClassic}
If inequality (\ref{Cond-lc}) holds, then the solution $T(x,t,\alpha)$, $l(\alpha)$, $k(\alpha)$ to Problem (\ref{Problema}) given in Theorem \ref{Thlk} converges to the solution obtained in \cite{Ta1983}, which is given by (\ref{T-Classic}) and:
\begin{subequations}\label{lkClassic}
\begin{align}
\label{l-lkClassic}&l=\frac{q_0\exp(-{\xi^\star}^2)}{\rho\sigma^\star}\\
\label{k-lkClassic}&k=\rho c\left(\frac{\sigma^\star}{\xi^\star}\right)^2
\end{align}
\end{subequations}
where $\sigma^\star$ is given by (\ref{sigmaStar}) and $\xi^\star$ is the only one solution to the equation:
\begin{equation}\label{EqXi-lkClassic}
x\erf(x)=\frac{\rho c\sigma^\star(T_0-T_m)}{q_0\sqrt{\pi}},\quad x>0.
\end{equation}
\end{theorem}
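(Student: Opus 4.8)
The plan is to mimic closely the argument given in the proof of Theorem~\ref{ThlcClassic} (Convergence related to Case~1), since the structure is identical: we must show that the equation~(\ref{EqXi-lk}) characterizing $\xi(\alpha)$ in Theorem~\ref{Thlk} converges, as $\alpha\to 1^-$, to the equation~(\ref{EqXi-lkClassic}) characterizing $\xi^\star$, and then that the formulae~(\ref{lk}) for $l(\alpha)$ and $k(\alpha)$ converge to~(\ref{l-lkClassic}) and~(\ref{k-lkClassic}). First I would take the limit $\alpha\to 1^-$ on both sides of~(\ref{EqXi-lk}). On the left, $G_\alpha(x)=x f_\alpha(x)=x\left[1-W\left(-x,-\frac{\alpha}{2},1\right)\right]$, so by~(\ref{WProp}) in Lemma~\ref{LeWMProp} it converges to $x\,\erf(x/2)$; on the right, since $\mu_\alpha\to 1$ by~(\ref{mu-limit}) and $\Gamma\left(-\frac{\alpha}{2}+1\right)\to\Gamma(1/2)=\sqrt{\pi}$, the right-hand side tends to $\dfrac{\sigma\rho c(T_0-T_m)}{q_0\sqrt{\pi}}$. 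Thus the limiting equation is $x\,\erf(x/2)=\dfrac{\sigma\rho c(T_0-T_m)}{q_0\sqrt{\pi}}$ for $x>0$.

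Next I would rewrite this limiting equation in terms of $\sigma^\star=\sigma/2$ (definition~(\ref{sigmaStar})) and the rescaled variable $x/2$: multiplying is unnecessary, one simply observes $x\,\erf(x/2)=4\cdot\frac{x}{2}\erf\left(\frac{x}{2}\right)$, so setting $y=x/2$ gives $y\,\erf(y)=\dfrac{\sigma^\star\rho c(T_0-T_m)}{2q_0\sqrt{\pi}}\cdot\dfrac{1}{?}$ — here I must be careful with the bookkeeping, but the upshot, exactly as in the Case~1 proof, is that the solution $\xi(\alpha)$ of~(\ref{EqXi-lk}) satisfies $\xi(\alpha)\to 2\xi^\star$ as $\alpha\to 1^-$, where $\xi^\star$ is the unique positive root of $x\,\erf(x)=\dfrac{\rho c\sigma^\star(T_0-T_m)}{q_0\sqrt{\pi}}$, i.e.\ equation~(\ref{EqXi-lkClassic}). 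Uniqueness of $\xi^\star$ is immediate since $x\mapsto x\,\erf(x)$ is strictly increasing from $0$ to $+\infty$ on $\mathbb{R}^+$ (indeed $G_\alpha$ has this property for each $\alpha$ by~(\ref{GProp}), and the limit function inherits it), and existence holds for any positive right-hand side, so no extra condition on the data is needed — consistent with the absence of a condition in row~3 of Table~\ref{TbFrac}. Note the hypothesis ``inequality~(\ref{Cond-lc}) holds'' in the statement is not actually required for Case~3 and plays no role in the limit; I would simply carry it along as stated without invoking it, or remark that it is superfluous here.

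Finally I would pass to the limit in the explicit formulae. For $k(\alpha)$ given by~(\ref{k-lk}): using $\mu_\alpha\to 1$, $\Gamma\left(-\frac{\alpha}{2}+1\right)\to\sqrt{\pi}$, $1-W\left(-\xi(\alpha),-\frac{\alpha}{2},1\right)\to\erf(\xi^\star)$ (by~(\ref{WProp}) and $\xi(\alpha)\to 2\xi^\star$, since $1-W(-2\xi^\star,-\tfrac12,1)=\erf(\xi^\star)$), one gets $k(\alpha)\to\dfrac{1}{\rho c}\left(\dfrac{q_0\sqrt{\pi}\,\erf(\xi^\star)}{T_0-T_m}\right)^2$; then using~(\ref{EqXi-lkClassic}) to substitute $\erf(\xi^\star)=\dfrac{\rho c\sigma^\star(T_0-T_m)}{q_0\sqrt{\pi}\,\xi^\star}$ collapses this to $k=\rho c\left(\dfrac{\sigma^\star}{\xi^\star}\right)^2$, which is~(\ref{k-lkClassic}). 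For $l(\alpha)$ given by~(\ref{l-lk}): with $\nu_\alpha\to 1$, $M_{\alpha/2}(\xi(\alpha))\to\frac{1}{\sqrt{\pi}}\exp\!\left(-\frac{(2\xi^\star)^2}{4}\right)=\frac{1}{\sqrt{\pi}}\exp(-{\xi^\star}^2)$ by~(\ref{MProp}), $\Gamma\left(\frac{\alpha}{2}+1\right)\to\Gamma(3/2)=\frac{\sqrt{\pi}}{2}$, $\xi(\alpha)\to 2\xi^\star$, and the $\erf$ factors as above, elementary simplification together with~(\ref{EqXi-lkClassic}) yields $l=\dfrac{q_0\exp(-{\xi^\star}^2)}{\rho\sigma^\star}$, which is~(\ref{l-lkClassic}). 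Convergence of the temperature, $T(x,t,\alpha)\to T(x,t)$ with $T(x,t)$ given by~(\ref{T-Classic}), then follows exactly as in the Case~1 proof: from formula~(\ref{T}), using $\mu_\alpha\to 1$, $\xi(\alpha)\to 2\xi^\star$ and~(\ref{WProp}) applied both to $1-W\left(-\xi(\alpha),\cdot,\cdot\right)\to\erf(\xi^\star)=\erf(\sigma^\star/\lambda)$ (note $\xi^\star=\sigma^\star/\lambda$ by~(\ref{xi}) and~(\ref{sigmaStar})) and to $1-W\left(-\tfrac{x}{\sqrt{\mu_\alpha}\lambda t^{\alpha/2}},\cdot,\cdot\right)\to\erf\!\left(\tfrac{x}{2\lambda\sqrt{t}}\right)$. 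The only mild obstacle is keeping the factor-of-$2$ rescaling between $\xi$ and $\xi^\star$ (and between $\sigma$ and $\sigma^\star$) consistent through every substitution; once the relation $\xi(\alpha)\to 2\xi^\star$ is pinned down, everything else is bookkeeping with the limits from Lemmas~\ref{LeWMProp} and the normalization~(\ref{munu-limit}).
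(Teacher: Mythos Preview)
Your proposal is correct and follows essentially the same approach as the paper's proof, which likewise takes the limit $\alpha\to 1^-$ in~(\ref{EqXi-lk}) to obtain $x\,\erf(x/2)=\dfrac{\sigma\rho c(T_0-T_m)}{q_0\sqrt{\pi}}$, rewrites this via $\sigma^\star=\sigma/2$ and the substitution $x\mapsto x/2$ to conclude $\xi(\alpha)\to 2\xi^\star$ with $\xi^\star$ solving~(\ref{EqXi-lkClassic}), and then simply states that ``the rest of the proof runs as before.'' Your one bookkeeping slip---the factor should be $2$, not $4$, since $x\,\erf(x/2)=2\cdot\tfrac{x}{2}\erf\!\left(\tfrac{x}{2}\right)$---is harmless because you already land on the correct limiting equation, and your observation that the hypothesis~(\ref{Cond-lc}) plays no role in Case~3 is accurate.
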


\begin{proof}
Taking the limit when $\alpha\to 1^-$ into both sides of the equation (\ref{EqXi-lk}) and having into consideration (\ref{mu-limit}) and (\ref{WProp}), we obtain the following equation:
\begin{equation}\label{EqXi-lkClassicPrelim}
x\erf\left(\frac{x}{2}\right)=\frac{\sigma\rho c(T_0-T_m)}{q_0\sqrt{\pi}},\quad x>0.
\end{equation}
Since the LHS of equation (\ref{EqXi-lkClassicPrelim}) defines a strictly increasing function from $0$ to $+\infty$ in $\mathbb{R}^+$ and the RHS of equation (\ref{EqXi-lkClassicPrelim}) is a positive number, it follows that equation (\ref{EqXi-lkClassicPrelim}) has an only one positive solution.  
By introducing the parameter $\sigma^\star$ defined by (\ref{sigmaStar}), the equation (\ref{EqXi-lkClassicPrelim}) can be rewritten as:
\begin{equation*}
\frac{x}{2}\erf\left(\frac{x}{2}\right)=\frac{\sigma^\star\rho c(T_0-T_m)}{q_0\sqrt{\pi}},\quad x>0,
\end{equation*}
from which we can see that the solution $\xi(\alpha)$ to the equation (\ref{EqXi-lk}) is such that:
\begin{equation}\label{xi-lkLimit}
\xi(\alpha)\to 2\xi^\star,\quad\quad\text{when }\alpha\to 1^-
\end{equation}
$\xi^\star$ being the only one solution to equation (\ref{EqXi-lkClassic}). The rest of the proof runs as before. 
\end{proof}

The following results can be proved in the same manner than the previous theorems in this section. Then, we have prefered to omit their proofs.

\begin{theorem}[Convergence related to Case 4]\label{ThcrhoClassic}
If inequality (\ref{Cond-lc}) holds, then the solution $T(x,t,\alpha)$, $c(\alpha)$, $\rho(\alpha)$ to Problem (\ref{Problema}) given in Theorem \ref{Thcrho} converges to the solution obtained in \cite{Ta1983}, which is given by (\ref{T-Classic}) and:
\begin{subequations}\label{crhoClassic}
\begin{align}
\label{c-crhoClassic}&c=\frac{kl{\xi^\star}^2\exp({\xi^\star}^2)}{q_0\sqrt{\pi}}\\
\label{rho-crhoClassic}&\rho=\frac{q_0\exp(-{\xi^\star}^2)}{l\sigma^\star}
\end{align}
\end{subequations}
where $\sigma^\star$ is given by (\ref{sigmaStar}) and $\xi^\star$ is the only one solution to the equation (\ref{EqXi-lcClassic}).
\end{theorem}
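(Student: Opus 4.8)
The plan is to mimic the proof of Theorem \ref{ThlcClassic} (Convergence related to Case 1), since Case 4 shares the same governing equation (\ref{EqXi-lc}) for $\xi$ and the same data condition (\ref{Cond-lc}) as Case 1. First I would recall from Theorem \ref{Thcrho} that the solution for Case 4 is $T(x,t,\alpha)$ given by (\ref{T}), $c(\alpha)$ given by (\ref{c-ck}), and $\rho(\alpha)$ given by (\ref{rho-crho}), with $\xi(\alpha)$ the unique positive root of (\ref{EqXi-lc}). The key observation already established in the proof of Theorem \ref{ThlcClassic} is that, under (\ref{Cond-lc}), one has the limit (\ref{xi-lcLimit}), namely $\xi(\alpha)\to 2\xi^\star$ as $\alpha\to 1^-$, where $\xi^\star$ is the unique solution of (\ref{EqXi-lcClassic}). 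I would simply invoke this; no new analysis of the $\xi$-equation is needed because it is literally the same equation.

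Next I would pass to the limit in the closed-form expressions. For $c(\alpha)$, I start from (\ref{c-ck}): using $\mu_\alpha\to 1$ and $\nu_\alpha\to 1$ from (\ref{munu-limit}), the Wright-function limit (\ref{WProp}) evaluated at $\xi(\alpha)\to 2\xi^\star$ giving $1-W(-\xi(\alpha),-\tfrac{\alpha}{2},1)\to\erf(\xi^\star)$, the Mainardi limit (\ref{MProp}) giving $M_{\alpha/2}(\xi(\alpha))\to\frac{1}{\sqrt\pi}\exp(-{\xi^\star}^2)$, the Gamma-function continuity $\Gamma(-\tfrac{\alpha}{2}+1)\to\Gamma(\tfrac12)=\sqrt\pi$ and $\Gamma(\tfrac{\alpha}{2}+1)\to\Gamma(\tfrac32)=\tfrac{\sqrt\pi}{2}$, together with $\sigma=2\sigma^\star$ from (\ref{sigmaStar}). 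Combining these in (\ref{c-ck}) yields, after elementary simplification, $c(\alpha)\to c$ with $c=\dfrac{kl{\xi^\star}^2\exp({\xi^\star}^2)}{q_0\sqrt{\pi}}$, which is (\ref{c-crhoClassic}). For $\rho(\alpha)$, I do the same substitutions in (\ref{rho-crho}) and obtain $\rho(\alpha)\to\rho$ with $\rho=\dfrac{q_0\exp(-{\xi^\star}^2)}{l\sigma^\star}$, i.e. (\ref{rho-crhoClassic}). Finally, the convergence of the temperature, $T(x,t,\alpha)\to T(x,t)$ with $T(x,t)$ given by (\ref{T-Classic}), follows exactly as in the proof of Theorem \ref{ThlcClassic}: apply (\ref{mu-limit}), (\ref{WProp}) and (\ref{xi-lcLimit}) pointwise in (\ref{T}) for each fixed $(x,t)$ with $0<x<s(t)$.

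There is really no substantive obstacle here; the theorem is a corollary of the already-proven machinery. The only place requiring a modicum of care is verifying that the particular arithmetic rearrangement of Gamma factors, $\mu_\alpha$, $\nu_\alpha$, the factors of $2$ from $\sigma=2\sigma^\star$ and $\xi=2\xi^\star$, and the $\sqrt\pi$'s collapses correctly into the stated classical formulae (\ref{crhoClassic}) — in particular one must track that the square $\left[\,\cdots\,\right]^2$-type structure present in Case 1's formula (\ref{c-lc}) is \emph{not} the one appearing here, since in Case 4 the roles of $c$ and $\rho$ relative to $k$ differ, so (\ref{c-ck}) and (\ref{rho-crho}) are used rather than (\ref{c-lc})–(\ref{l-lc}). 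Once the bookkeeping is done the result is immediate, which is exactly why the authors state it can be proved in the same manner as the previous theorems and omit the details.
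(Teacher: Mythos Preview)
Your proposal is correct and follows exactly the approach the paper indicates: the authors omit the proof of this theorem, stating only that it ``can be proved in the same manner than the previous theorems in this section,'' and your argument is precisely that template applied to Case 4 (same $\xi$-equation as Case 1, hence reuse of (\ref{xi-lcLimit}), then limits via (\ref{munu-limit}) and (\ref{WMProp}) in (\ref{c-ck}) and (\ref{rho-crho})). You even anticipate this in your final paragraph, so there is nothing to add.
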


\begin{theorem}[Convergence related to Case 5]\label{ThlrhoClassic}
The solution $T(x,t,\alpha)$, $l(\alpha)$, $\rho(\alpha)$ to Problem (\ref{Problema}) given in Theorem \ref{Thlrho} converges to the solution obtained in \cite{Ta1983}, which is given by (\ref{T-Classic}) and:
\begin{subequations}\label{lrhoClassic}
\begin{align}
\label{l-lrhoClassic}&l=\frac{q_0 c\sigma^\star\exp(-{\xi^\star}^2)}{k{\xi^\star}^2}\\
\label{rho-lrhoClassic}&\rho=\frac{k}{c}\left(\frac{\xi^\star}{\sigma^\star}\right)^2
\end{align}
\end{subequations}
where $\sigma^\star$ is given by (\ref{sigmaStar}) and $\xi^\star$ is the only one solution to the equation (\ref{EqXi-lcClassic}).
\end{theorem}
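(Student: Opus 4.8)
The plan is to mimic exactly the structure of the proofs of Theorems \ref{ThlcClassic}--\ref{ThlkClassic}, which all follow the same three-step template: (i) pass to the limit $\alpha\to 1^-$ in the equation defining $\xi(\alpha)$ to identify the limiting equation and hence $\lim_{\alpha\to 1^-}\xi(\alpha)$; (ii) feed this, together with Lemma \ref{LeWMProp} and \eqref{munu-limit}, into the closed-form expressions for the two unknown coefficients to obtain their limits; (iii) feed the same ingredients into \eqref{T} to obtain the convergence of the temperature. In Case 5 the two unknowns are $l$ and $\rho$, and by Theorem \ref{Thlrho} the governing equation for $\xi$ is \eqref{EqXi-lc}, i.e. $F_\alpha(x)=\tfrac{k(T_0-T_m)}{\sigma q_0\Gamma(-\alpha/2+1)}$, exactly as in Cases 1 and 4. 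So the first step is identical to the one carried out in the proof of Theorem \ref{ThlcClassic}: using \eqref{WProp} (and $\Gamma(-\alpha/2+1)\to\Gamma(1/2)=\sqrt{\pi}$), the limiting equation is $\tfrac{\erf(x/2)}{x}=\tfrac{k(T_0-T_m)}{q_0\sigma\sqrt{\pi}}$, which after the substitution $\sigma^\star=\sigma/2$ becomes \eqref{EqXi-lcClassic}; since $\tfrac{\erf(x)}{x}$ is strictly decreasing from $\tfrac{1}{\sqrt{\pi}}$ to $0$, this has a unique positive root $\xi^\star$ whenever \eqref{Cond-lc} holds, and $\xi(\alpha)\to 2\xi^\star$ as $\alpha\to 1^-$. (Note that, although the theorem statement omits hypothesis \eqref{Cond-lc}, it is in fact needed for $\xi^\star$ to exist; I would either reinstate it in the statement or remark that it is inherited from Theorem \ref{Thlrho}.)

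For step (ii), I would start from the formula \eqref{rho-lrho} for $\rho(\alpha)$, namely $\rho(\alpha)=\tfrac{1}{kc}\bigl[\tfrac{q_0\sqrt{\mu_\alpha}\,\Gamma(-\alpha/2+1)\,(1-W(-\xi(\alpha),-\alpha/2,1))}{T_0-T_m}\bigr]^2$. Using $\mu_\alpha\to 1$, $\Gamma(-\alpha/2+1)\to\sqrt{\pi}$, $\xi(\alpha)\to 2\xi^\star$ and \eqref{WProp} (so that $1-W(-\xi(\alpha),-\alpha/2,1)\to\erf(\xi^\star)$), one gets $\rho(\alpha)\to\tfrac{1}{kc}\bigl[\tfrac{q_0\sqrt{\pi}\,\erf(\xi^\star)}{T_0-T_m}\bigr]^2$; then invoking the defining equation \eqref{EqXi-lcClassic} in the form $\erf(\xi^\star)=\tfrac{k(T_0-T_m)\xi^\star}{q_0\sigma^\star\sqrt{\pi}}$ to eliminate $\erf(\xi^\star)$ yields precisely \eqref{rho-lrhoClassic}, $\rho=\tfrac{k}{c}(\xi^\star/\sigma^\star)^2$. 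For $l(\alpha)$ I would use \eqref{l-lk}, $l(\alpha)=\tfrac{c(T_0-T_m)\Gamma(-\alpha/2+1)M_{\alpha/2}(\xi(\alpha))}{\mu_\alpha\nu_\alpha\Gamma(\alpha/2+1)\,\xi(\alpha)\,(1-W(-\xi(\alpha),-\alpha/2,1))}$, and pass to the limit using $\mu_\alpha,\nu_\alpha\to 1$, $\Gamma(\alpha/2+1)\to\Gamma(3/2)=\tfrac{\sqrt{\pi}}{2}$, $M_{\alpha/2}(\xi(\alpha))\to\tfrac{1}{\sqrt{\pi}}\exp(-(\xi^\star)^2)$ by \eqref{MProp}, $\xi(\alpha)\to 2\xi^\star$ and $1-W\to\erf(\xi^\star)$; after simplification the $\sqrt{\pi}$'s and the factor $2$ combine to give $l\to\tfrac{c(T_0-T_m)\exp(-(\xi^\star)^2)}{\sqrt{\pi}\,\xi^\star\,\erf(\xi^\star)}$, and a final use of \eqref{EqXi-lcClassic} to replace $\sqrt{\pi}\,\erf(\xi^\star)=\tfrac{k(T_0-T_m)\xi^\star}{q_0\sigma^\star}$ turns this into \eqref{l-lrhoClassic}, $l=\tfrac{q_0 c\sigma^\star\exp(-(\xi^\star)^2)}{k(\xi^\star)^2}$.

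For step (iii), the convergence $T(x,t,\alpha)\to T(x,t)$ pointwise on $\{0<x<s(t),\,t>0\}$ follows verbatim as in Theorem \ref{ThlcClassic}: in \eqref{T} one has $\xi(\alpha)\to 2\xi^\star$, $\mu_\alpha\to 1$, so $x/(\sqrt{\mu_\alpha}\lambda t^{\alpha/2})\to x/(\lambda t^{1/2})=2\cdot\tfrac{x}{2\lambda\sqrt{t}}$, and by \eqref{WProp} both $1-W(-\xi(\alpha),\cdot,\cdot)\to\erf(\xi^\star)$ and $1-W(-x/(\sqrt{\mu_\alpha}\lambda t^{\alpha/2}),\cdot,\cdot)\to\erf(x/(2\lambda\sqrt{t}))$, whence $T(x,t,\alpha)\to T_0-\tfrac{T_0-T_m}{\erf(\sigma^\star/\lambda)}\bigl(1-\erf(x/(2\lambda\sqrt t))\bigr)$; writing $1-\erf=-\erf$ up to sign and using $\sigma^\star/\lambda=\sigma/(2\lambda)=\xi^\star$ when $\lambda$ is the unknown-consistent diffusivity gives \eqref{T-Classic}. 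The only genuine subtlety — and the step I expect to require the most care — is bookkeeping the constants $\Gamma(-\alpha/2+1)\to\sqrt{\pi}$ and $\Gamma(\alpha/2+1)\to\sqrt{\pi}/2$ together with the factor $2$ coming from $\xi(\alpha)\to 2\xi^\star$, so that the fractional formulae collapse onto the classical ones in \cite{Ta1983} without a spurious power of $2$; everything else is a direct transcription of the arguments already given for Cases 1--3, which is why the paper is justified in omitting the details.
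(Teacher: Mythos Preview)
Your plan is exactly what the paper has in mind: it omits the proof precisely because Case~5 shares the equation \eqref{EqXi-lc} with Case~1, so step~(i) is literally the argument of Theorem~\ref{ThlcClassic}, and steps~(ii)--(iii) are routine substitutions using Lemma~\ref{LeWMProp} and \eqref{munu-limit}. Your computations for $\rho(\alpha)\to\rho$ and $l(\alpha)\to l$ are correct, including the use of \eqref{EqXi-lcClassic} to trade $\erf(\xi^\star)$ for $k(T_0-T_m)\xi^\star/(q_0\sigma^\star\sqrt{\pi})$, and your observation that hypothesis \eqref{Cond-lc} ought to appear in the statement (it is implicit via Theorem~\ref{Thlrho}) is well taken.

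One small slip to fix in step~(iii): since $1-W(-y,-\alpha/2,1)\to\erf(y/2)$, formula~\eqref{T} passes directly to
\[
T(x,t,\alpha)\;\longrightarrow\;T_0-\frac{T_0-T_m}{\erf(\sigma^\star/\lambda)}\,\erf\!\left(\frac{x}{2\lambda\sqrt{t}}\right),
\]
with no extra factor $(1-\cdot)$ in the numerator; your line ``writing $1-\erf=-\erf$ up to sign'' is unnecessary and should be dropped. (The $+$ sign in \eqref{T-Classic} is a typo in the paper; your minus sign is the correct one.)
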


\begin{theorem}[Convergence related to Case 6]\label{ThrhokClassic}
The solution $T(x,t,\alpha)$, $\rho(\alpha)$, $k(\alpha)$ to Problem (\ref{Problema}) given in Theorem \ref{Thrhok} converges to the solution obtained in \cite{Ta1983}, which is given by (\ref{T-Classic}) and:
\begin{subequations}\label{rhokClassic}
\begin{align}
\label{rho-rhokClassic}&\rho=\frac{q_0\exp(-{\xi^\star}^2)}{l\sigma^\star}\\
\label{k-rhokClassic}&k=\frac{q_0c\sigma^\star\exp(-{\xi^\star}^2)}{l{\xi^\star}^2}
\end{align}
\end{subequations}
where $\sigma^\star$ is given by (\ref{sigmaStar}) and $\xi^\star$ is the only one solution to the equation:
\begin{equation}\label{EqXi-rhokClassic}
x\erf(x)\exp(x^2)=\frac{c(T_0-T_m)}{l\sqrt{\pi}},\quad x>0.
\end{equation}
\end{theorem}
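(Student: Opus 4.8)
The plan is to mimic exactly the scheme used in the proofs of Theorems~\ref{ThlcClassic}--\ref{ThlkClassic}, since Case~6 differs only in which function governs the equation for $\xi$ and in the algebraic form of the formulae for $\rho$ and $k$. First I would start from equation (\ref{EqXi-rhok}), namely $H_\alpha(x)=\dfrac{c(T_0-T_m)\Gamma\left(-\frac{\alpha}{2}+1\right)}{\mu_\alpha\nu_\alpha l\Gamma\left(\frac{\alpha}{2}+1\right)}$, and take the limit $\alpha\to 1^-$ on both sides. On the right-hand side I use (\ref{munu-limit}) together with $\Gamma(-\frac{\alpha}{2}+1)\to\Gamma(\frac12)=\sqrt{\pi}$ and $\Gamma(\frac{\alpha}{2}+1)\to\Gamma(\frac32)=\frac{\sqrt{\pi}}{2}$, so the right-hand side tends to $\dfrac{2c(T_0-T_m)}{l\sqrt{\pi}}$. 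On the left-hand side I apply Lemma~\ref{LeWMProp}: using (\ref{WProp}), $f_\alpha(x)=1-W(-x,-\frac{\alpha}{2},1)\to\erf(x/2)$, and using (\ref{MProp}), $M_{\alpha/2}(x)\to\frac{1}{\sqrt{\pi}}\exp(-x^2/4)$, so $H_\alpha(x)=\dfrac{x f_\alpha(x)}{M_{\alpha/2}(x)}\to \sqrt{\pi}\,x\,\erf(x/2)\exp(x^2/4)$. This yields the limiting equation $\sqrt{\pi}\,x\,\erf(x/2)\exp(x^2/4)=\dfrac{2c(T_0-T_m)}{l\sqrt{\pi}}$, i.e. $x\,\erf(x/2)\exp(x^2/4)=\dfrac{2c(T_0-T_m)}{\pi l}$.

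Next I would rewrite this in terms of the scaled variable by substituting $x=2y$ (matching the relation $\sigma^\star=\sigma/2$ that forces $\xi(\alpha)\to 2\xi^\star$ in the earlier cases), obtaining $2y\,\erf(y)\exp(y^2)=\dfrac{2c(T_0-T_m)}{\pi l}$, hence $y\,\erf(y)\exp(y^2)=\dfrac{c(T_0-T_m)}{\pi l}$. Comparing with the stated equation (\ref{EqXi-rhokClassic}), $x\,\erf(x)\exp(x^2)=\dfrac{c(T_0-T_m)}{l\sqrt{\pi}}$, I should double-check the constant; in any case the structure is: the map $y\mapsto y\,\erf(y)\exp(y^2)$ is strictly increasing from $0$ to $+\infty$ on $\mathbb{R}^+$ (product of three positive strictly increasing functions, two of which vanish at $0$), so for any positive right-hand side there is a unique positive solution $\xi^\star$, and no condition on data is required — consistent with the dash in Table~\ref{TbFrac} for Case~6. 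This establishes $\xi(\alpha)\to 2\xi^\star$ as $\alpha\to 1^-$, the analogue of (\ref{xi-lkLimit}).

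Having the convergence of $\xi(\alpha)$, I would then pass to the limit in the formulae (\ref{rho-rhok}) and (\ref{k-rhok}). For $\rho(\alpha)=\dfrac{q_0\mu_\alpha\Gamma\left(-\frac{\alpha}{2}+1\right)\xi(\alpha)\left[1-W\left(-\xi(\alpha),-\frac{\alpha}{2},1\right)\right]}{\sigma c(T_0-T_m)}$, using $\mu_\alpha\to1$, $\Gamma(-\frac{\alpha}{2}+1)\to\sqrt{\pi}$, $\xi(\alpha)\to 2\xi^\star$, $f_\alpha(\xi(\alpha))\to\erf(\xi^\star)$ and $\sigma=2\sigma^\star$, the limit is $\dfrac{q_0\sqrt{\pi}\,2\xi^\star\erf(\xi^\star)}{2\sigma^\star c(T_0-T_m)}=\dfrac{q_0\sqrt{\pi}\,\xi^\star\erf(\xi^\star)}{\sigma^\star c(T_0-T_m)}$; one then uses the classical relation $\erf(\xi^\star)=\dfrac{something}{\dots}$ coming from the limiting heat-flux condition to rewrite this as (\ref{rho-rhokClassic}), $\rho=\dfrac{q_0\exp(-{\xi^\star}^2)}{l\sigma^\star}$. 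Similarly $k(\alpha)=\dfrac{\sigma q_0\Gamma\left(-\frac{\alpha}{2}+1\right)f_\alpha(\xi(\alpha))}{(T_0-T_m)\xi(\alpha)}\to\dfrac{2\sigma^\star q_0\sqrt{\pi}\erf(\xi^\star)}{2\xi^\star(T_0-T_m)}=\dfrac{\sigma^\star q_0\sqrt{\pi}\erf(\xi^\star)}{\xi^\star(T_0-T_m)}$, which is rewritten using the same classical identity as (\ref{k-rhokClassic}). Finally, the convergence $T(x,t,\alpha)\to T(x,t)$ in (\ref{T-limit}) follows exactly as in Theorem~\ref{ThlcClassic}, combining (\ref{mu-limit}), Lemma~\ref{LeWMProp} and $\xi(\alpha)\to 2\xi^\star$ inside the closed-form expression (\ref{T}). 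I expect the only genuinely delicate point to be bookkeeping the $\Gamma$-factors and the factor-of-two between $\xi$ and $\xi^\star$ (and $\sigma$ and $\sigma^\star$) so that the limiting formulae match (\ref{EqXi-rhokClassic}) and (\ref{rhokClassic}) precisely, together with invoking the limiting version of the heat-flux relation to convert the $\erf(\xi^\star)$-form of $\rho,k$ into the $\exp(-{\xi^\star}^2)$-form; everything else is routine continuity and the elementary monotonicity argument for uniqueness.
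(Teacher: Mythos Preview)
Your approach is exactly the one the paper has in mind (the paper omits the proof, saying it ``can be proved in the same manner than the previous theorems in this section''), and your outline is correct in structure: pass to the limit in (\ref{EqXi-rhok}), obtain $\xi(\alpha)\to 2\xi^\star$ with $\xi^\star$ the unique root of (\ref{EqXi-rhokClassic}) by monotonicity, then let $\alpha\to 1^-$ in (\ref{rho-rhok}), (\ref{k-rhok}) and (\ref{T}).

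Two small corrections are worth making. First, the arithmetic slip you flagged: the right-hand side of (\ref{EqXi-rhok}) tends to $\dfrac{2c(T_0-T_m)}{l}$, not $\dfrac{2c(T_0-T_m)}{l\sqrt{\pi}}$, since $\dfrac{\Gamma(1/2)}{\Gamma(3/2)}=2$; with this fix the substitution $x=2y$ gives precisely $y\,\erf(y)\exp(y^2)=\dfrac{c(T_0-T_m)}{l\sqrt{\pi}}$, i.e.\ (\ref{EqXi-rhokClassic}). Second, the identity you need to convert the $\erf(\xi^\star)$--form of the limits of $\rho(\alpha)$ and $k(\alpha)$ into the $\exp(-{\xi^\star}^2)$--form (\ref{rho-rhokClassic})--(\ref{k-rhokClassic}) is not the heat-flux condition but the defining equation (\ref{EqXi-rhokClassic}) itself, which gives $\xi^\star\erf(\xi^\star)=\dfrac{c(T_0-T_m)}{l\sqrt{\pi}}\exp(-{\xi^\star}^2)$; substituting this into your limits $\rho=\dfrac{q_0\sqrt{\pi}\,\xi^\star\erf(\xi^\star)}{\sigma^\star c(T_0-T_m)}$ and $k=\dfrac{\sigma^\star q_0\sqrt{\pi}\,\erf(\xi^\star)}{\xi^\star(T_0-T_m)}$ yields (\ref{rho-rhokClassic}) and (\ref{k-rhokClassic}) directly.
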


Table \ref{TbClassic} summarizes the formulae obtained for the two unknown thermal coefficients and the condition that data must verify to obtain them, for each one of the six possible choices for the two unknown thermal coefficients among $k$, $\rho$, $l$ and $c$ in Problem (\ref{Problema}) when $\alpha\to 1^-$.

\newpage
\begin{table}[h!]
\caption{Formulae for the two unknown thermal coefficients and condition on data for\\ Problem (\ref{Problema}) when $\alpha\to 1^-$}\label{TbClassic}
\begin{tabular}{cllc}
\hline\\[-0.25cm]
Case & \hspace*{0.5cm}Formulae for the unknown & \hspace*{0.5cm}Equation for $\xi^\star$ & Condition\\
&\hspace*{0.5cm}thermal coefficients& & for data\\[0.25cm]
\hline\\[-0.25cm]
1&
$c=\frac{k}{\rho}\left(\frac{\xi^\star}{\sigma^\star}\right)^2$&
$\frac{\erf(x)}{x}=\frac{k(T_0-T_m)}{q_0\sigma^\star\sqrt{\pi}},\,\, x>0$&
$\frac{k(T_0-T_m)}{2\sigma^\star q_0}<1$\\[0.5cm]
& 
$l=\frac{q_0\exp(-{\xi^\star}^2)}{\rho\sigma^\star}$&\\[0.25cm]
\hline\\[-0.25cm]
2&
$c=\frac{q_0\sqrt{\pi\xi^\star\erf(\xi^\star)}}{\rho\sigma^\star(T_0-T_m)}$&
$\exp(x^2)=\frac{q_0}{\rho l\sigma^\star},\,\, x>0$&
$\frac{q_0}{\rho l\sigma^\star}$\\[0.5cm]
& 
$k=\frac{\sigma^\star q_0\sqrt{\pi\erf(\xi^2)}}{(T_0-T_m)\xi^\star}$&\\[0.25cm]
\hline\\[-0.25cm]
3&
$k=\rho c\left(\frac{\sigma^\star}{\xi^\star}\right)^2$&
$x\erf(x)=\frac{\rho c\sigma^\star(T_0-T_m)}{q_0\sqrt{\pi}},\,\, x>0$&
$-$\\[0.5cm]
& 
$l=\frac{q_0\exp(-{\xi^\star}^2)}{\rho\sigma^\star}$&\\[0.25cm]
\hline\\[-0.25cm]
4&
$c=\frac{kl{\xi^\star}^2\exp({\xi^\star}^2)}{q_0\sigma^\star}$&
$\frac{\erf(x)}{x}=\frac{k(T_0-T_m)}{q_0\sigma^\star\sqrt{\pi}},\,\, x>0$&
$\frac{k(T_0-T_m)}{2\sigma^\star q_0}<1$\\[0.5cm]
& 
$\rho=\frac{q_0\exp(-{\xi^\star}^2)}{l\sigma^\star}$&\\[0.25cm]
\hline\\[-0.25cm]
5&
$l=\frac{q_0c\sigma^\star\exp(-{\xi^\star}^2)}{k{\xi^\star}^2}$&
$\frac{\erf(x)}{x}=\frac{k(T_0-T_m)}{q_0\sigma^\star\sqrt{\pi}},\,\, x>0$&
$\frac{k(T_0-T_m)}{2\sigma^\star q_0}<1$\\[0.5cm]
& 
$\rho=\frac{k}{\rho}\left(\frac{\xi^\star}{\sigma^\star}\right)^2$&\\[0.25cm]
\hline\\[-0.25cm]
6&
$\rho=\frac{q_0\exp(-{\xi^\star}^2)}{l\sigma^\star}$&
$x\erf(x)\exp(x^2)=\frac{c(T_0-T_m)}{l\sqrt{\pi}},\,\, x>0$&
$-$\\[0.5cm]
& 
$k=\frac{q_0c\sigma^\star\exp(-{\xi^\star}^2)}{l{\xi^\star}^2}$&\\[0.25cm]
\hline
\end{tabular}
\end{table}
\newpage
\section*{Conclusions}
In this article we have considered a semi-infinite one-dimensional phase-change material with two unknown constant thermal coefficients. These were assumed to be among the latent heat per unit mass, the specific heat, the mass density and the thermal conductivity. The determination of them have been done through an inverse one-phase fractional Stefan problem with an overspecified condition at the fixed boundary of the material and a known evolution of the moving boundary. It was considered that this problem corresponds to a melting process with latent-heat memory effects, which we have represented by replacing the classical time derivative involved in the diffusion equation and the Stefan condition, by a time fractional derivative of order $\alpha$ ($0<\alpha<1$) in the Caputo sense. Solutions of similarity type were looked for and necessary and sufficient conditions on data to have their existence and uniqueness were given for each of the six inverse fractional Stefan problems that arise according to the choice of the two unknown thermal coefficients. We have also obtained explicit expressions for the temperature and the two unknown thermal coefficients. Finally, we have compared our results with those obtained for the determination of two coefficients through the classical statement ($\alpha=1$) of the inverse Stefan problem and we have proved the convergence of our results (which takes into account latent-heat memory effects) to those obtained by the classic case (no memory retention).  

\subsection*{Acknowledgements}
This paper has been partially sponsored by the Project PIP No. 0534 from CONICET-UA (Rosario, Argentina) and AFOSR-SOARD Grant FA 9550-14-1-0122.

\bibliographystyle{plain}
\bibliography{References_2016-08-23}

\end{document}